  \providecommand\BibTeX{{%
    \normalfont B\kern-0.5em{\scshape i\kern-0.25em b}\kern-0.8em\TeX}}}
\begin{document}

\title{Graph Embedding for Recommendation against Attribute Inference Attacks}
\author{Shijie Zhang}
\affiliation{%
  \institution{The University of Queensland}}
\email{shijie.zhang@uq.edu.au}

\author{Hongzhi Yin}
\authornote{Corresponding author; contributing equally with the first author.}
\affiliation{%
  \institution{The University of Queensland}}
\email{h.yin1@uq.edu.au}
\author{Tong Chen}
\affiliation{%
  \institution{The University of Queensland}}
\email{tong.chen@uq.edu.au}
\author{Zi Huang}
\affiliation{%
  \institution{The University of Queensland}}
\email{huang@itee@uq.edu.au}
\author{Lizhen Cui}
\affiliation{%
  \institution{Shandong University}}
\email{clz@sdu.edu.cn}
\author{Xiangliang Zhang}
\affiliation{%
  \institution{King Abdullah University of Science and Technology}}
\email{xiangliang.zhang@kaust.edu.sa}

\begin{abstract}
In recent years, recommender systems play a pivotal role in helping users identify the most suitable items that satisfy personal preferences. As user-item interactions can be naturally modelled as graph-structured data, variants of graph convolutional networks (GCNs) have become a well-established building block in the latest recommenders. Due to the wide utilization of sensitive user profile data, existing recommendation paradigms are likely to expose users to the threat of privacy breach, and GCN-based recommenders are no exception. Apart from the leakage of raw user data, the fragility of current recommenders under inference attacks offers malicious attackers a backdoor to estimate users' private attributes via their behavioral footprints and the recommendation results. However, little attention has been paid to developing recommender systems that can defend such attribute inference attacks, and existing works achieve attack resistance by either sacrificing considerable recommendation accuracy or only covering specific attack models or protected information. In our paper, we propose GERAI, a novel differentially private graph convolutional network to address such limitations. Specifically, in GERAI, we bind the information perturbation mechanism in differential privacy with the recommendation capability of graph convolutional networks. Furthermore, based on local differential privacy and functional mechanism, we innovatively devise a dual-stage encryption paradigm to simultaneously enforce privacy guarantee on users' sensitive features and the model optimization process. Extensive experiments show the superiority of GERAI in terms of its resistance to attribute inference attacks and recommendation effectiveness. 
\end{abstract}


\begin{CCSXML}
<ccs2012>
   <concept>
       <concept_id>10002951.10003227.10003351.10003269</concept_id>
       <concept_desc>Information systems~Collaborative filtering</concept_desc>
       <concept_significance>500</concept_significance>
       </concept>
 </ccs2012>
\end{CCSXML}

\ccsdesc[500]{Information systems~Collaborative filtering}
\keywords{Privacy-preserving Recommender System; Attribute Inference Attacks; Deep Learning; Differential Privacy}


\maketitle
\section{Introduction}
With the explosive growth of e-commerce, consumers are shopping with online platforms more frequently~\cite{guo2019streaming, chen2019air, yin2019social}. As an effective solution to information overload, recommender systems automatically discover the most relevant items or services for each user and thus improve both the user experience and business revenue. For this reason, recommender systems have become an indispensable part in our contemporary lives. 

Latent factor models like matrix factorization~\cite{mnih2008probabilistic} are typical collaborative filtering-based recommendations, which infer user-item interactions via learned latent user/item representations. Because user-item interactions can be conveniently formulated as graph-structured data, graph embedding-based recommenders~\cite{xie2016learning,shi2018heterogeneous,zhang2019inferring} are highly effective in uncovering users' subtle preferences toward items. As deep neural networks demonstrate superior capability of representation learning in various machine learning tasks, deep recommendation models, especially those derived from graph convolutional networks (GCNs) \cite{ying2018graph,wang2019neural,xia2020self,yu2020enhance} have recently become one of the most prominent techniques in this field.



To enhance the recommendation performance, especially for fresh (i.e., cold-start) customers, it is a common practice to incorporate side information (a.k.a. features or contexts) \cite{adomavicius2011context,zhang2020gcn,wang2020next} about users.  
During user registration, some service providers even start persuading users to complete questionnaires about personal demographics to facilitate user profiling. 
However, the utilization of user data containing personal information often sparks serious privacy concerns. A 2018 survey \cite{kats2018many} showed that more than $80\%$ US Internet users were concerned about how their personal data is being used on Facebook; and among Facebook users sharing less content on social media, $47\%$ reported that privacy issue was the main concern. Consequently, with the growing public awareness on privacy, a dilemma is presented to e-commerce platforms: either they proceed with such sensitive data acquisition process despite the high risk on privacy breach, or they allow users not to disclose their sensitive attributes but provide compromised recommendation performance as a result. In that sense, a sound privacy guarantee on the user side is highly desirable, which avoids uploading the unencrypted raw user features to a recommender system. Furthermore, according to the example that Apple is now telling users their personal data is protected before being shared for analytics, it also helps increase users' willingness to share their sensitive data.


Meanwhile, a more critical privacy issue comes from the fact that  users' sensitive attributes can still be disclosed purely based on how they behave. Regardless of the availability of features, recommenders learn explicit or latent profiles that reflect users' preferences based on her/his behavioral footprints (e.g., previous ratings and reviews), and produce personalized recommendations with the constructed profiles~\cite{rashid2002getting}. However, many early studies have shown that even a user's personal information can be accurately inferred via her/his interaction history~\cite{weinsberg2012blurme,kosinski2013private,calandrino2011you}. Such personal information includes age, gender, political orientation, health, financial status etc. and are highly confidential. Furthermore, the inferred attributes can be utilized to link users across multiple sites and break anonymity~\cite{shu2017user,goga2013exploiting}. For example, \cite{naranyanan2008robust} successfully deanonymizes Netflix users using the public IMDb user profiles. Due to the open-access nature of many platforms (e.g., Yelp and Amazon), users' behavioral trajectories can be easily captured by a malicious third-party, leading to catastrophic leakage of inferred user attributes. This is known as the attribute inference attack~\cite{gong2016you}, where the malicious attackers can be cyber criminals, data brokers, advertisers, etc. 
By proving that even a person's racial information and sexual orientation can be precisely predicted from merely the ``like'' behaviors on Facebook, Kosinski et al.~\cite{kosinski2013private} demonstrated that users' preference signals are highly vulnerable to attribute inference attacks. This is especially alarming for many GCN-based recommenders, since user representations are usually formed by aggregating information from her/his interacted items. Moreover, the personalized recommendation results can also be utilized by attackers since they are strong reflections on users' preferences and are increasingly accessible via services like friend activity tracing (e.g., Spotify) and group recommendation~\cite{yin2019social}.
Hence, this motivates us to design a secure recommender system that stays robust against attribute inference attacks.

In GCN-based recommenders, graphs are constructed by linking user and item nodes via their interactions. However, though existing GCNs are advantageous in binding a node's own features and its high-order connectivity with other nodes into an expressive representation, they exhibit very little consideration on user privacy. In fact, the field of privacy-preserving recommender systems that are resistant to attribute inference attacks is far from its maturity. \cite{polat2005privacy,nikolaenko2013privacy,erkin2010privacy,canny2002collaborative} have applied cryptography algorithms to the recommendation models, but the computational cost of encryption is too high to support real-world deployment. Recently, the notion of differential privacy (DP) has become a well-established approach for protecting the confidentiality of personal data. Essentially, DP works by adding noise to each data instance (i.e., perturbation), thus masking the original information in the data. In the context of both recommendation and graph embedding, there has also been attempts to adopt DP to perturb the output of matrix factorization algorithms~\cite{liu2015fast,berlioz2015applying,xu2018dpne}. Unfortunately, these approaches are designed to only prevent membership attacks which infer users' real ratings in the dataset, and are unable to provide a higher level of protection on users' sensitive information against inference attacks. A recent work~\cite{beigi2020privacy} systematically investigates the problem of developing and evaluating recommender systems under the attribute inference attack setting. Their proposed model RAP~\cite{beigi2020privacy} utilizes an adversarial learning paradigm where a personalized recommendation model and an attribute inference attack model are trained against each other, hence the attackers are more likely to fail when inferring user attributes from interaction records. However, it suffers from two major limitations. Firstly, as the design of RAP requires a pre-specified and fixed attribute inference model, its resistance to any arbitrary attacker is unguaranteed given the unpredictability of the inference model that an attacker may choose. 
Secondly, though RAP assumes the existence of users' sensitive attributes, it only treats them as ground-truth labels for training the inference model, and does not incorporate such important side information for recommendation. This design not only fails to ease users' privacy concerns on submitting their original attributes, but also greatly hinders the model's ability to securely utilize user features to achieve more accurate recommendation results. 

To this end, we address a largely overlooked defect of existing GCN-based recommenders, i.e., protecting users' private attributes from attribute inference attacks. 
Meanwhile, unlike existing inference-resistant recommenders, we would like the model to take advantage of user information for accurate recommendation without exerting privacy breach. 
In this paper, we subsume the GCN-based recommender under the differential privacy (DP) constraint, and propose a novel privacy-preserving recommender GERAI, namely Graph Embedding for Recommendation against Attribute Inference Attacks. In GERAI, we build its recommendation module upon the state-of-the-art inductive GCNs~\cite{defferrard2016convolutional,hamilton2017inductive,kipf2016semi} to jointly exploit the user-item interactions and the rich side information of users. To achieve optimal privacy strength, we propose a novel dual-stage perturbation paradigm with DP. Firstly, at the input stage, GERAI performs perturbation on the raw user features. On one hand, this offers users a privacy guarantee while sharing their sensitive data. On the other hand, the perturbed user features will make the generated recommendations less dependent on a user's true attributes, making it harder to infer those attributes via recommendation results. Specifically, we introduce local differential privacy (LDP) for feature perturbation, where each individual's original feature vector is transformed into a noisy version before being processed by the recommendation module. We further demonstrate that the perturbed input data satisfies the LDP constraint while retaining adequate utility for the recommender to learn the subtle user preferences. Secondly, we enforce DP on the optimization stage of GERAI so that the recommendation results are less likely to reveal a user's attributes and preferences~\cite{liu2015fast,berlioz2015applying,beigi2020privacy} in the inference attack. 
To achieve this, we innovatively resort to the functional mechanism~\cite{zhang2012functional} that allows to enforce DP by perturbing the loss function in the learning process. Different from methods that applies perturbation on recommendation results \cite{berlioz2015applying}, by perturbing the loss function, GERAI defends the inference attack without setting obstacles for learning meaningful associations between user profiles and recommended items. 

Overall, we summarize our contributions in the following:
\begin{itemize}
    \item We address the increasing privacy concerns in the recommendation context, and propose a novel solution GERAI, namely differentially private graph convolutional network to protect users' sensitive data against attribute inference attacks and provide high-quality recommendations at the same time.
    \item Our proposed GERAI innovatively incorporates differential privacy with a dual-stage perturbation strategy for both the input features and the optimization process. As such, GERAI assures user privacy and offers better recommendation effectiveness than existing privacy-preserving recommenders.
    \item We conduct extensive experiments to evaluate the performance of GERAI on real-world data. Comparisons with state-of-the-art baselines show that GERAI provides a better privacy guarantee with less compromise on the recommendation accuracy.
\end{itemize}

\section{Preliminaries}
In this section, we first revisit the definitions of differential privacy and then formally define our problem. Note that in the description below, all vectors and matrices are respectively denoted wiht bold lowercase and bold uppercase letters, and all sets are written in calligraphic uppercase letters.

\textbf{Differential Privacy.} Differential privacy (DP) is a strong mathematical guarantee of privacy in the context of machine learning tasks. DP was first introduced by~\cite{dwork2014algorithmic} and it aims to preclude adversarial inference on any raw input data from a model's output. Given a privacy coefficient $\epsilon > 0$, the $\epsilon-$differential privacy ($\epsilon-$DP) is defined as follows:

\textit{Definition 2.1}. ($\epsilon-$Differential Privacy) For a randomized function (e.g., a perturbation algorithm or machine learning model) $f(\cdot)$ that takes a dataset as its input, it satisfies $\epsilon-$DP if: 
\begin{equation}\label{eq:DP}
    Pr[f(\mathcal{D}) \in O] \leq exp(\epsilon)Pr[f(\mathcal{D}')\in O],
\end{equation}
where $Pr[\cdot]$ represents probability, $\mathcal{D}$ and $\mathcal{D}^{'}$ are any two datasets differing on only one data instance, and $O$ denotes all subsets of possible output values that $f(\cdot)$ produces. If $O$ is continuous, then the probability term can be replaced by a probability density function. Eq.(\ref{eq:DP}) implies that the probability of generating the model output with $\mathcal{D}$ is at most $exp(\epsilon)$ times smaller than with $\mathcal{D}'$. That is, $f(\cdot)$ should not overly depend on any individual data instance, providing each instance roughly the same privacy. 
As a common practice for privacy protection, each individual user's personal data can be perturbed by adding controlled noise before it is fed into $f(\cdot)$. 
In this case, the data owned by every user is regarded as a singleton dataset, and we require the function $f(\cdot)$ to provide differential privacy when such a singleton database is given as the input. Specifically, this is termed as $\epsilon-$local differential privacy ($\epsilon-$LDP): 

\textit{Definition 2.1}. ($\epsilon-$Local Differential Privacy) A randomized function $f(\cdot)$ satisfies $\epsilon-$LDP if and only if for any two users' data $t$ and $t'$, we have:
\begin{equation}\label{eq:LDP}
    Pr[f(t) = t^{*}] \leq exp(\epsilon) \cdot Pr[f(t') = t^{*}]
\end{equation}
where $t^{*}$ denotes the output of $f(\cdot)$. The lower $\epsilon$ provides stronger privacy but may result in lower accuracy of a trained machine learning model as each user's data is heavily perturbed. Hence, $\epsilon$ is also called the privacy budget that controls the trade-off between privacy and utility in DP. With the security guarantee from DP, an external attacker model cannot infer which user's data is used to produce the output $t^{*}$ (e.g., the recommendation results) with high confidence. 

\textbf{Privacy-Preserving Recommender System.} Let $\mathcal{G} = (\mathcal{U} \cup \mathcal{V}, \mathcal{E})$ denote a weighted bipartite graph. $\mathcal{U} = \{u_{1}, u_{2}, ..., u_{|\mathcal{U}|} \}$ and $\mathcal{V} = \{v_{1}, v_{2}, ..., v_{|\mathcal{V}|}\}$ are the sets of users and items. A weighted edge $(u, v, r_{uv}) \in \mathcal{E}$ means that user $u$ has rated item $v$, with weight $r_{uv}$ as $1$. We use $\mathcal{N}(u)$ to denote the set of items rated by $u$ and $\mathcal{N}(v)$ to denote all users who have rated item $v$. Following \cite{beigi2020privacy}, for each user $u$ we construct a dense input vector $\mathbf{x}_u \in \mathbb{R}^{d_0}$ with each element representing either a sensitive attribute $s\in \mathcal{S}$ or a pre-defined statistical feature $s \in \mathcal{S}'$ of $u$. All categorical features are represented by one-hot encodings in $\textbf{x}_u$, while all numerical features are further normalized into $[-1, 1]$. We define the target of a privacy-preserving recommender system below.

\textbf{Problem 1.} Given the weighted graph $\mathcal{G}$ and user feature vectors $\{\textbf{x}_u|u\in \mathcal{U}\}$, we aim to learn a privacy-preserving recommender system that can recommend $K$ products of interest to each user, while any malicious attacker model cannot accurately infer users' sensitive attributes (i.e., gender, occupation and age in our case) from the users' interaction data including both the users' historical ratings and current recommendation results. It is worth noting that our goal is to protect users against a malicious attacker, but not against the recommender system that is trusted.
\begin{figure*}[t]
    \centering
    \scalebox{0.55}{%
    \includegraphics{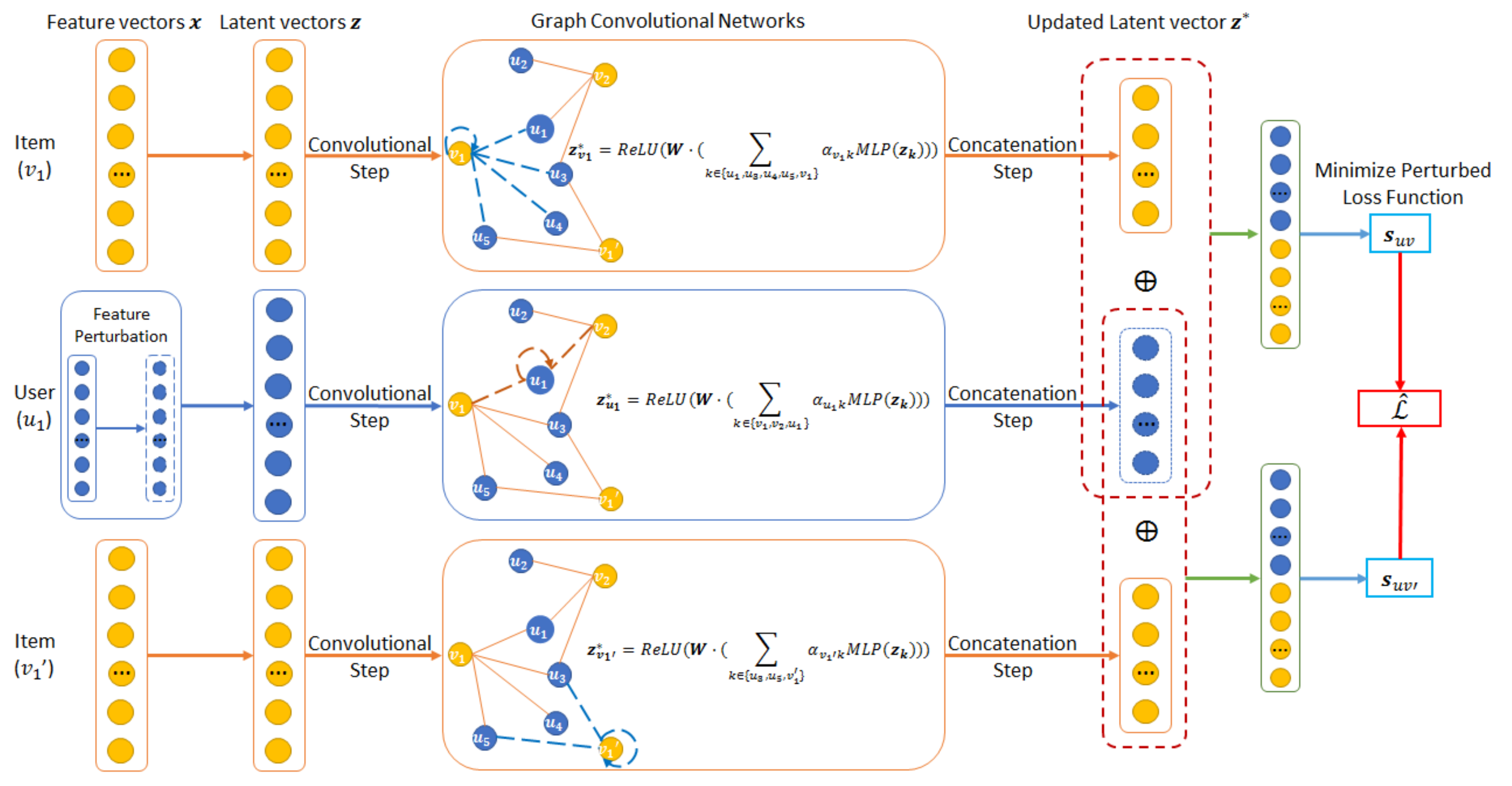}}
    \caption{The overview of GERAI}
   \vspace{-1.5em}
    \label{fig:framework}
\end{figure*}

\section{GCN-based Recommendation Module}\label{st:gcn}
As we aim to address the privacy concerns in GCN-based recommendation models, in this work we build our base recommender upon GCNs \cite{kipf2016semi,hamilton2017inductive}. A recommender, at its core, learns vector representations (a.k.a. embeddings) of both users and items based on their historical interactions, then a user's interest on each item can be easily inferred by measuring the user-item similarity in the latent vector space. When performing recommendation on the graph-structured data, owing to the ability to preserve a graphs topological structure, GCNs can produce highly expressive user and item embeddings for recommendation. Given a 
weighted graph $\mathcal{G} = (\mathcal{U} \cup \mathcal{V}, \mathcal{E})$, users and items are two types of nodes connected by observed links. Then, for each node, GCN computes its embedding by iteratively aggregating information from its local neighbors, where all node embeddings are optimized for predicting the affinity of each user-item pair for personalized ranking.

We first introduce our recommendation module from the user side. For each user $u$, the information $\mathcal{I}(u)$ passed into $u$ comes from the user's first-order neighbors, i.e., items rated by $u$:
\begin{equation}
\begin{split}
	\mathcal{I}(u) &= \{\mathbf{m}_v|v\in \mathcal{N}(u)\} \cup \{\textbf{m}_u\} \\
    &= \{MLP(\mathbf{z}_v)|v\in \mathcal{N}(u)\} \cup\{MLP(\textbf{z}_u)\},
\end{split}
\end{equation}
where $MLP(\cdot)$ is a multi-layer perceptron, $\mathbf{m}_u$/$\mathbf{m}_v$ are the messages from user/item nodes, and $\textbf{z}_u, \textbf{z}_v \in \mathbb{R}^d$ respectively denote the learnable latent embeddings of user $u$ and item $v$. Note that $\textbf{z}_u, \textbf{z}_v$ can be initialized as follows:
\begin{equation}
	\textbf{z}_u = \textbf{E}_{\mathcal{U}}\textbf{x}_u, \,\,\textbf{z}_v = \textbf{E}_{\mathcal{V}}\textbf{x}_v,
\end{equation}
where $\textbf{x}_u \in \mathbb{R}^{d_0}$ is user $u$'s raw feature vector and $\textbf{E}_{\mathcal{U}} \in \mathbb{R}^{d\times d_0}$ is the user embedding matrix. $\textbf{x}_v \in \mathbb{R}^{d_1}$ and $\textbf{E}_{\mathcal{V}} \in \mathbb{R}^{d\times d_1}$ are respectively the item feature vector and embedding matrix. To ensure our model's generalizability, we formulate $\textbf{x}_v$ as an item's one-hot encoding as we do not assume the availability of item features. 
Then, an aggregation operation is performed to merge all information in $\mathcal{I}(u)$, thus forming an updated user embedding $\textbf{z}_u^{*}$: 
\begin{equation}\label{eq:convolution}
    \mathbf{z}_{u}^{*} = ReLU(\mathbf{W} \cdot Aggregate(\mathcal{I}(u))+ \mathbf{b}),
\end{equation}
where $Aggregate(\cdot)$ is the aggregation function and $ReLU(\cdot)$ denotes the rectified linear unit for nonlinearity, and $\mathbf{W}$ and $\mathbf{b}$ are learnable weight matrix and bias vector. 
Motivated by the effectiveness of attention mechanism \cite{velivckovic2017graph} in graph representation learning, we quantify the varied contributions of each element in $\mathcal{I}(u)$ to embedding $\textbf{z}_u^*$ by assigning each neighbour node a different weight. Formally, we define $Aggregate(\mathcal{I}(u))$ as:
\begin{equation}
    Aggregate(\mathcal{I}(u)) = \sum_{k\in \mathcal{N}(u)\cup \{u\}}\alpha_{uk}\mathbf{m}_{k},
\end{equation}
where $\alpha_{uk}$ denotes the attention weight implying the importance of message $\mathbf{m}_k\in\mathcal{I}(u)$ to user node $u$ during aggregation. Specifically, to compute $\alpha_{uk}$, we first calculate an attention score $a_{uk}$ via the following attention network: 
\begin{equation}
    a_{uk} = {\mathbf{w}_{2}^{\top}} \cdot \sigma(\mathbf{W}_{1} (\mathbf{m}_{k} \oplus \mathbf{z}_{u}) + \mathbf{b}_{1}) + b_{2},
\end{equation}
where $\oplus$ represents the concatenation of two vectors. Afterwards, each final attention weight $\alpha_{uk}$ is computed by normalizing all the attentive scores using softmax:
\begin{equation}
      \alpha_{uk} = \frac{\exp(a_{uk})}{\sum_{k'\in \mathcal{N}(u)\cup\{u\}} \exp(a_{uk'})}.
\end{equation}

Likewise, on the item side, we repeat the message passing scheme by aggregating the information from an item's interacted users in $\mathcal{N}(v)$ to learn the item embedding $\textbf{z}_v^*$:
\begin{equation}
    \mathbf{z}_{v}^{*} = ReLU(\mathbf{W} \cdot Aggregate(\mathcal{I}(v))+ \mathbf{b}),
\end{equation}
where $\mathcal{I}(v) = \{MLP(\textbf{z}_u)|u\in \mathcal{N}(v)\}\cup\{MLP(\textbf{z}_v)\}$. Note that the same network structure and trainable parameters are shared in the computation of both user and item embeddings.

To train our model for top-$K$ recommendation, we leverage the pairwise Bayesian personalized ranking (BPR) loss~\cite{rendle2012bpr} to learn model parameters. To facilitate personalized ranking, we firstly generate a ranking score $s_{uv}$ for an arbitrary user-item tuple $(u,v)$:
\begin{equation}\label{eq:predict}
\begin{split}
    \textbf{q}_{uv} &= ReLU (\mathbf{W}_{3}(\mathbf{z}_u^{*} \oplus \mathbf{z}_v^{*}) + \mathbf{b}_3),\\
    s_{uv} &= \mathbf{h}^{\top}\textbf{q}_{uv},
\end{split}
\end{equation}
where $\textbf{h}\in \mathbb{R}^d$ is the projection weight. Intuitively, BPR optimizes ranking performance by comparing two ranking scores $s_{uv}$, $s_{uv'}$ for user $u$ on items $v$ and $v'$. In each training case $(u, v, v')$, $v$ is the positive item sampled from $\mathcal{E}$, while $v'$ is the negative item having $r_{uv'} \notin \mathcal{E}$. Then, BPR encourages that $v$ should have a higher ranking score than $v'$ by enforcing:
\begin{equation}\label{eq:rec_loss}
    \mathcal{L} = \sum_{(u,v,v')\in \mathcal{D}}-\log \sigma(s_{uv}-s_{uv'}) +\gamma ||\Theta||^2,
\end{equation}
where $\mathcal{D}$ is the training set, $\sigma(\cdot)$ is the sigmoid function, $\Theta$ denotes parameters in the GCN-based recommendation module, and $\gamma$ is the L2-regularization coefficient.

\section{GERAI: Graph Embedding for Recommendation against Attribute Inference}
In this section, we formally present the design of GERAI, a recommendation model that can defend attribute inference attacks via a novel dual-stage differential privacy constraint. Figure \ref{fig:framework} depicts the workflow of GERAI, where two important perturbation operations take place at both the input stage for user features and the optimization stage for the loss function.
The first step is to achieve $\epsilon^{\rhd}-$local differential privacy ($\epsilon^{\rhd}-$LDP) by directly adding noise to users' raw feature vectors $\textbf{x}_u$ used for learning user embeddings, which can avoid exposing users' sensitive data to an unsecured cyber environment during upload, while providing the GCN-based recommender with side information for learning expressive user representations. Then, to prevent GERAI from generating recommendation results that can reveal users' sensitive attributes, we further enforce $\epsilon-$DP in the optimization stage by perturbing its loss function $\mathcal{L}$. However, this is a non-trivial task as it requires to calculate the privacy sensitivity of $\mathcal{L}$, which involves analyzing the complex relationship between the input data and learnable parameters. Hence, we propose a novel solution by deriving a polynomial approximation $\widetilde{\mathcal{L}}$ of the original BPR loss $\mathcal{L}$, so as to support sensitivity calculation and perform perturbation on $\widetilde{\mathcal{L}}$ to facilitate differentially private training of GERAI. Notably, to distinguish the DP constraints in two stages, we denote $\epsilon^{\rhd}$ as local privacy budget and $\epsilon$ as global privacy budget, respectively.
\vspace{-1em}
\subsection{User Feature Perturbation at Input Stage}
\begin{algorithm}[t!]
            \caption{Perturbing 1-Dimensional Numerical Data with Piecewise Mechanism}
            \KwIn{A single numerical feature $x \in [-1,1]$ and coefficient $\epsilon^{\rhd}$}
            \KwOut{Perturbed feature $\widehat{x} \in [-C, C]$}
             Uniformly sample $\xi$ from $[0,1]$\;
          \eIf{$\xi < \frac{\exp(\frac{\epsilon^{\rhd}}{2})}{\exp(\frac{\epsilon^{\rhd}}{2})+1}$}
               {Uniformly sample $\widehat{x}$ from $[\ell(x), \pi(x)]$\;}
               {Uniformly sample $\widehat{x}$ from $[-C, \ell(x)\big{)} \cup \big{(}\pi(x), C]$\;}
           \textbf{return} $\widehat{x}$
\label{alg:perturbation}          
\end{algorithm}
\setlength{\textfloatsep}{1em}
At the input level, the feature vector $\textbf{x}_{u}$ of each user $u$ is perturbed before being fed into the recommender module. This helps address users' privacy concerns on sharing their personal attributes and keep them confidential during the upload process. Furthermore, as we will show in Section \ref{sec:ablation}, perturbing user features contributes to defending attribute inference attacks as the recommendation results are no longer based on the actual attributes.
Then, instead of the original $\textbf{x}_u$, the perturbed data $\widehat{\textbf{x}}_u$ will be used for the recommendation purpose. To achieve this, we treat numerical and categorical features separately, as these two types of data will require different perturbation strategies. Firstly, for numerical data, perturbation is performed based on a randomized encryption mechanism named piecewise mechanism (PM) \cite{wang2019collecting}. Algorithm~\ref{alg:perturbation} shows the PM-based perturbation for each scalar numerical feature $x\in \textbf{x}_u$. In PM, the original feature $x \in [-1, 1]$ will be transformed into a perturbed value $\widehat{x} \in [-C, C]$, with $C$ defined as follows: 
\begin{equation}
    C = \frac{\exp(\frac{\epsilon^{\rhd}}{2})+1}{\exp(\frac{\epsilon^{\rhd}}{2})-1}.
\end{equation}

The probability density function of the noisy output $\widehat{x}$ is:
\begin{equation}\label{eq:prob_density}
    Pr(\widehat{x} = c|x) = \begin{cases}
 p,& \text{ if } c\in[\ell(x),\pi(x)] \\ 
 \frac{p}{\exp(\epsilon^{\rhd})},& \text{ if } c \in [-C, \ell(x)\big{)} \cup \big{(}\pi(x), C]
\end{cases},
\end{equation}
where:
\begin{equation}
\begin{split}
p&=\frac{exp(\epsilon^{\rhd})-exp(\epsilon^{\rhd}/2)}{2exp(\epsilon^{\rhd}/2)+2},\\
\ell(x) &= \frac{C+1}{2} \cdot x - \frac{C-1}{2}, \\
\pi(x) &= \ell(x) + C - 1.
\end{split}
\end{equation}

The following lemma establishes the theoretical guarantee of Algorithm~\ref{alg:perturbation}.
\begin{lemma}\label{lemma:1}
Algorithm~\ref{alg:perturbation} satisfies $\epsilon^{\rhd}-$local differential privacy.
\end{lemma}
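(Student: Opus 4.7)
The plan is to verify the $\epsilon^{\rhd}$-LDP condition $\Pr[f(x){=}c] \leq \exp(\epsilon^{\rhd}) \cdot \Pr[f(x'){=}c]$ of Definition 2.1 directly from the probability density function in Eq.~(12). Since Algorithm~\ref{alg:perturbation} produces a continuous output, I will interpret the LDP inequality in terms of density values, which is the standard convention already flagged in the paper after Eq.~(\ref{eq:DP}).

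First I would check that Algorithm~\ref{alg:perturbation} indeed realizes the piecewise density asserted in Eq.~(12). The algorithm samples from the inner interval $[\ell(x), \pi(x)]$ with probability $\frac{\exp(\epsilon^{\rhd}/2)}{\exp(\epsilon^{\rhd}/2)+1}$ and from the outer interval $[-C,\ell(x))\cup(\pi(x),C]$ with the complementary probability. Using $\pi(x)-\ell(x)=C-1$ and the total length $2C-(C-1)=C+1$ of the outer region, the conditional density on the inner interval is $\frac{\exp(\epsilon^{\rhd}/2)}{(\exp(\epsilon^{\rhd}/2)+1)(C-1)}$ and on the outer region is $\frac{1}{(\exp(\epsilon^{\rhd}/2)+1)(C+1)}$. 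Substituting $C=\tfrac{\exp(\epsilon^{\rhd}/2)+1}{\exp(\epsilon^{\rhd}/2)-1}$ gives $C-1=\tfrac{2}{\exp(\epsilon^{\rhd}/2)-1}$ and $C+1=\tfrac{2\exp(\epsilon^{\rhd}/2)}{\exp(\epsilon^{\rhd}/2)-1}$, after which both density values simplify precisely to $p$ and $p/\exp(\epsilon^{\rhd})$ respectively, matching Eq.~(12). A brief check that $p(C-1) + \tfrac{p}{\exp(\epsilon^{\rhd})}(C+1) = 1$ confirms the density is normalized.

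The LDP bound then follows from the two-level structure of the density. For any two inputs $x,x'\in[-1,1]$ and any output $c\in[-C,C]$, both $\Pr[\widehat{x}=c\mid x]$ and $\Pr[\widehat{x}=c\mid x']$ take values in $\{p,\,p/\exp(\epsilon^{\rhd})\}$. The worst-case ratio arises exactly when $c$ lies in the high-probability interval $[\ell(x),\pi(x)]$ for $x$ but in the low-probability region for $x'$, yielding
\begin{equation*}
\frac{\Pr[\widehat{x}=c\mid x]}{\Pr[\widehat{x}=c\mid x']} \;\leq\; \frac{p}{p/\exp(\epsilon^{\rhd})} \;=\; \exp(\epsilon^{\rhd}),
\end{equation*}
and in all other cases the ratio is either $1$ or $\exp(-\epsilon^{\rhd})<\exp(\epsilon^{\rhd})$. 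This is exactly the inequality required by Definition~2.1 (applied in its density form), so the randomized mapping $x\mapsto\widehat{x}$ satisfies $\epsilon^{\rhd}$-LDP.

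The only real obstacle is the bookkeeping in the density derivation, specifically making sure the two cases in Eq.~(12) correctly represent the two-stage sampling in Algorithm~\ref{alg:perturbation} and that the stated constants $C$ and $p$ make the density consistent with the sampling probabilities. Once this is settled, the LDP inequality itself is immediate from the fact that the density only takes two distinct values whose ratio is exactly $\exp(\epsilon^{\rhd})$. No composition or advanced DP tools are needed.
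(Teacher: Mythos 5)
Your proof is correct and follows essentially the same route as the paper's: both bound the density ratio $\Pr(\widehat{x}\mid x)/\Pr(\widehat{x}\mid x')$ by the worst case $p/(p/\exp(\epsilon^{\rhd}))=\exp(\epsilon^{\rhd})$ using the two-level form of the density in Eq.~(\ref{eq:prob_density}). The only difference is that you additionally verify (correctly) that the sampling procedure in Algorithm~\ref{alg:perturbation} actually realizes that density with the stated constants $C$ and $p$, a step the paper takes for granted.
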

\begin{proof}
By Eq.(\ref{eq:prob_density}), let $x, x' \in [-1, 1]$ be any two input values and $\widehat{x} \in [-C, C]$ denote the output of Algorithm~\ref{alg:perturbation}, then we have:
\begin{equation}
\frac{Pr(\widehat{x}|x)}{Pr(\widehat{x}|x')} \leq \frac{p}{p/\exp(\epsilon^{\rhd})} = \exp(\epsilon^{\rhd}).
\end{equation}
Thus, Algorithm 1 satisfies $\epsilon^{\rhd}-$LDP.
\end{proof}
\begin{algorithm}[t!]
            \caption{Perturbing Multidimensional Data with Numerical and Categorical Features}
            \KwIn{Feature vector $\mathbf{x}_u = \textbf{x}_{(1)}\oplus\textbf{x}_{(2)}\oplus\cdots\oplus\textbf{x}_{(d')} \in \mathbb{R}^{d_0}$ and coefficient $\epsilon^{\rhd}$}
            \KwOut{Perturbed feature vector $\widehat{\mathbf{x}}_u$} 
            $\widehat{\mathbf{x}}_u = \widehat{\textbf{x}}_{(1)}\oplus\widehat{\textbf{x}}_{(2)}\oplus\cdots\oplus\widehat{\textbf{x}}_{(d')} \leftarrow \{0\}^{d_0}$\; 
            $\mathcal{A} \leftarrow \zeta$ different values uniformly sampled from $\{1,2,...,d'\}$\;
            \For{each feature index $i \in \mathcal{A}$}{
                \eIf{$\textbf{x}_{(i)}$ is a numerical feature}{
                  $\widehat{\textbf{x}}_{(i)} \!\leftarrow$ Execute Algorithm ~\ref{alg:perturbation} with $x\!=\textbf{x}_i$ and $\epsilon^{\rhd} \!=\! \frac{\epsilon^{\rhd}}{\zeta}$\;
                  $\widehat{\textbf{x}}_{(i)} \!\leftarrow \frac{d'}{\zeta}\widehat{\textbf{x}}_{(i)}$\;
                   }
                   {fetch categorical feature $i$'s one-hot encoding from $\textbf{x}_u$, denoted by $\textbf{c}\leftarrow \mathbf{x}_{(i)} = [0,\cdots,0,1,0,\cdots,0]$\;
                    \For{each element $c \in \mathbf{c}$}
                    {
                    Draw $c'$ from $\{0,1\}$ with $Pr[c'=1] = \begin{cases}
                     0.5,& \text{ if } c' = 1 \\ 
                     \frac{1}{\exp(\frac{\epsilon^{\rhd}}{\zeta})+1},& \text{ if } c = 0. \end{cases}$\;
                    $c\leftarrow c'$\;  
                    }
                    $\widehat{\textbf{x}}_{(i)} \leftarrow \textbf{c}$\;
                  }
            }
           return $\widehat{\textbf{x}}_u$
\label{alg:perturbation2}
\end{algorithm}
However, the PM perturbation presented above is only designed for numerical data that is 1-dimensional. Hence, inspired by~\cite{wang2019collecting}, we generalize Algorithm~\ref{alg:perturbation} to the multidimensional $\textbf{x}_u$ containing both numerical and categorical attributes. Given $\textbf{x}_u\in\mathbb{R}^{d_0}$, considering it encodes $d'$ different features in total, we can rewrite it as $\textbf{x}_u = \textbf{x}_{(1)}\oplus\textbf{x}_{(2)}\oplus\cdots\oplus\textbf{x}_{(d')}$, where the $i$-th feature $\textbf{x}_{(i)}$ ($1\leq i \leq d'$) is either an one-dimensional numeric or an one-hot encoding vector for a categorical feature. On this basis, we propose a comprehensive approach for perturbing such multidimensional data. The detailed perturbation process is depicted in Algorithm~\ref{alg:perturbation2}. Noticeably, we only perturb $\zeta<d'$ features in $\textbf{x}_u$. This is because that, if we straightforwardly treat each of the $d'$ features in $\textbf{x}_u$ as an individual element in the dataset, then according to the composition theorem~\cite{dwork2014algorithmic}, the local privacy budget for each feature will shrink to $\frac{\epsilon^{\rhd}}{d'}$ in order to maintain $\epsilon^{\rhd}-$LDP. As a consequence, this will significantly harm the utility of encrypted data. Hence, to preserve reasonable quality of each perturbed numerical or categorical feature, we propose to encrypt only a fraction of (i.e., $\zeta$) features in $\textbf{x}_u$, ensuring a higher local privacy budget of $\frac{\epsilon^{\rhd}}{\zeta}$. As shown in Algorithm~\ref{alg:perturbation2}, to prevent privacy leakage, the unselected $d'-\zeta$ features will be dropped by masking them with $0$. Thus, to offset the recommendation accuracy loss caused by dropping these features, we follow the empirical study in \cite{wang2019collecting} to determine the appropriate value of $\zeta$:
\begin{equation}
     \zeta = \max \{1, \min\{d', \lfloor\frac{\epsilon^{\rhd}}{2.5} \rfloor\}\}.
\end{equation}
Additionally, when perturbing each categorical feature $\textbf{x}_{(i)}\in\textbf{x}_u$, we extend the continuous sampling strategy in Algorithm~\ref{alg:perturbation} to a binarized version for each element/bit within the one-hot encoding $\textbf{x}_{(i)}$ with the updated local privacy budget $\frac{\epsilon^{\rhd}}{\zeta}$. As the privacy guarantee of the perturbed categorical feature $\widehat{\textbf{x}}_{(i)}$ can be verified in a similar way to numerical features~\cite{wang2017locally}, we have omitted this part to be succinct. In this regard, our perturbation strategy for the user-centric data in recommendation can provide $\epsilon^{\rhd}-$LDP, as we summarize below:
\begin{lemma}
Algorithm 2 satisfies $\epsilon^{\rhd}-$local differential privacy.
\end{lemma}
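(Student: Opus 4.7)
The plan is to reduce Algorithm~\ref{alg:perturbation2} to a sequential composition of per--feature mechanisms, each of which is $\frac{\epsilon^{\rhd}}{\zeta}$--LDP, and to argue that none of the data--independent bookkeeping steps (index sampling, zero--masking, rescaling) can leak extra information.

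First, I would fix two arbitrary inputs $\mathbf{x}_u$ and $\mathbf{x}_u'$ and an arbitrary output $\widehat{\mathbf{x}}_u$, and condition on the random index set $\mathcal{A}$. Since $\mathcal{A}$ is drawn uniformly from $\{1,\dots,d'\}$ without looking at the data, the sampling distribution cancels in the ratio $\Pr[\widehat{\mathbf{x}}_u\mid\mathbf{x}_u]/\Pr[\widehat{\mathbf{x}}_u\mid\mathbf{x}_u']$. The same holds for the deterministic zero--masking of the $d'-\zeta$ unselected features and for the post--hoc scalar multiplication $\widehat{\mathbf{x}}_{(i)}\leftarrow\frac{d'}{\zeta}\widehat{\mathbf{x}}_{(i)}$, which is a data--independent post--processing operation and hence preserves LDP.

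Next, I would verify that the per--feature mechanism is $\frac{\epsilon^{\rhd}}{\zeta}$--LDP in both branches. For numerical features, this is immediate from Lemma~\ref{lemma:1} with the privacy coefficient instantiated to $\frac{\epsilon^{\rhd}}{\zeta}$. For categorical features, I would write out the bit--wise randomized response and bound the ratio of output probabilities for two distinct one--hot encodings of the same feature; the key observation is that this bit--level response is a re--parameterization of the standard randomized response whose log--ratio is exactly $\frac{\epsilon^{\rhd}}{\zeta}$, matching the claim quoted from~\cite{wang2017locally}. This is where I would spend the most care, because it is the only step for which the paper does not already supply a lemma, and because one must check that differing one--hot encodings (which disagree on two bits) still yield a ratio bounded by $\exp(\frac{\epsilon^{\rhd}}{\zeta})$ under the particular coupling used by the mechanism.

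Finally, I would invoke the sequential composition theorem of~\cite{dwork2014algorithmic}: applying $\zeta$ independent $\frac{\epsilon^{\rhd}}{\zeta}$--LDP mechanisms to disjoint coordinates yields a $\zeta\cdot\frac{\epsilon^{\rhd}}{\zeta}=\epsilon^{\rhd}$--LDP mechanism on the concatenated output. Combining this with the conditioning argument on $\mathcal{A}$ and the post--processing invariance of LDP gives the bound $\Pr[\widehat{\mathbf{x}}_u\mid\mathbf{x}_u]\le\exp(\epsilon^{\rhd})\Pr[\widehat{\mathbf{x}}_u\mid\mathbf{x}_u']$ for every pair of inputs and every output, establishing that Algorithm~\ref{alg:perturbation2} satisfies $\epsilon^{\rhd}$--local differential privacy. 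The main obstacle, as noted above, is the clean treatment of the categorical branch; everything else is essentially composition plus the observation that the data--independent steps are harmless.
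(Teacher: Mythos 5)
\begin{quote}
Your proposal is correct and follows essentially the same route as the paper, whose entire proof is the one-line observation that Algorithm~2 composes $\zeta$ mechanisms each satisfying $\frac{\epsilon^{\rhd}}{\zeta}$-LDP and then applies the composition theorem of~\cite{dwork2014algorithmic}. You simply make explicit the steps the paper leaves implicit (conditioning on the data-independent index set $\mathcal{A}$, post-processing invariance of the masking and rescaling, and the two-bit randomized-response check for the categorical branch, which the paper defers to~\cite{wang2017locally}), all of which check out.
\end{quote}
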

\begin{proof}
As Algorithm~\ref{alg:perturbation2} is composed of $\zeta$ times of $\frac{\epsilon^{\rhd}}{\zeta}-$LDP operations, then based on the composition theorem~\cite{dwork2014algorithmic}, Algorithm~\ref{alg:perturbation2} satisfies $\epsilon^{\rhd}-$LDP.
\end{proof}

\subsection{Loss Perturbation at Optimization Stage}
\begin{algorithm}[t]
\caption{Optimizing GERAI}
\KwIn{Maximum iteration number $\mathcal{T}$, coefficient $\epsilon$ and learning rate $\eta$}\
\KwOut{Optimal Parameters ${\Theta}^*$ of GERAI}\
$\Delta \leftarrow d + \frac{d^2}{4}$ \;
\For{$0\leq j\leq 2$}
{
\For{$\phi \in \Phi_{j}$}{
$\lambda_{\phi} \leftarrow \sum_{t\in \mathcal{D}}\lambda_{\phi t} + Lap(\frac{\Delta}{\epsilon |\mathcal{D}|})$, where $t=(u, v, v')$ denotes a triplet training sample\;
}
}
$\widehat{\mathcal{L}} \leftarrow \sum_{j=0}^{2}\sum_{t\in\mathcal{D}}\lambda_{\phi t}(\textbf{h}^{\top}\textbf{q}_{uv}-\textbf{h}^{\top}\textbf{q}_{uv'})$, where $\widehat{\mathcal{L}}$ is the perturbed loss\;
Initialize ${\Theta}^*$ randomly\;
\For{each $u\in \mathcal{U}$}
{$\widehat{\textbf{x}}_u \leftarrow$ Algorithm~\ref{alg:perturbation2}\;}
\For{$t \in \mathcal{T}$}
{
Draw a minibatch $\mathcal{B}$ \;
$\widehat{\mathcal{L}} \leftarrow$ Eq.(\ref{eq:loss_approx})\;
Take a gradient step to optimize ${\Theta}^*$ with learning rate $\eta$\;
}
\textbf{Return} ${\Theta}^*$.
\label{alg:loss_perturbation}
\end{algorithm}

In most scenarios, the results generated by a predictive model (e.g., models for predicting personal credit or diseases) carry highly sensitive information about a user, and this is also the case for recommender systems, since the recommended items can be highly indicative on a user's personal interests and demographics. Though privacy can be achieved via direct perturbation on the generated results \cite{lei2011differentially,chaudhuri2011differentially}, it inevitably impedes a model's capability of learning an accurate mapping from its input to output \cite{zhang2012functional}, making the learned recommender unable to fully capture personalized user preferences for recommendation. Hence, in the recommendation context, we innovatively propose to perturb the ranking loss $\mathcal{L}$ (i.e., Eq.(\ref{eq:rec_loss})) instead of perturbing the recommendation results in GERAI. This incurs the analysis of the privacy sensitivity  $\Delta$ of $\mathcal{L}$. For any function, the privacy sensitivity is the maximum L1 distance between its output values 
given two neighbor datasets differing in one data instance. Intuitively, the larger that $\Delta$ is, the heavier perturbation noise is needed to maintain a certain level of privacy. However, directly computing $\Delta$ from $\mathcal{L}$ is non-trivial due to its unbounded output range and the complex association between the input data and trainable parameters.

Hence, we present a novel solution to preserving global $\epsilon-$DP for our ranking task. Motivated by the functional mechanism (FM)~\cite{zhang2012functional} used for loss perturbation in regression tasks, we first derive a polynomial approximation $\widetilde{\mathcal{L}}$ for $\mathcal{L}$
to allow for convenient privacy sensitivity computation and make the private-preserving optimization process more generic. Then, GERAI perturbs $\widetilde{\mathcal{L}}$ by injecting Laplace noise to enforce $\epsilon-$DP. It is worth noting that, to calculate the privacy sensitivity of $\widetilde{\mathcal{L}}$, we apply a normalization step\footnote{This assumption can be easily enforced by the clip function.} to every latent predictive feature $\mathbf{q}_{uv}$ produced in Eq.(\ref{eq:predict}), which ensures every element in $\mathbf{q}_{uv}$ is bounded by $(0,1)$. Using Taylor expansion, we derive $\widetilde{\mathcal{L}}$, the polynomial approximation of $\mathcal{L}$:
\begin{equation}\label{eq:loss_approx}
\resizebox{.86\linewidth}{!}{
    $\widetilde{\mathcal{L}} = \frac{1}{|\mathcal{D}|}\sum_{\forall (u,v,v') \in \mathcal{D}}\sum_{j = 0}^{\infty} \frac{f^{(k)}(0)}{k!}(\textbf{h}^{\top}\textbf{q}_{uv}-\textbf{h}^{\top}\textbf{q}_{uv'})^{j}
$}
\end{equation}
where $\frac{f^{(k)}(0)}{k!}$ is the k-th derivative of $\widetilde{\mathcal{L}}$ at $0$. Recall that $\mathbf{h} = [h_1,h_2,...,h_d]$ is a projection vector containing $d$ values. Let $\phi(\textbf{h}) = h^{c_1}_1 h^{c_2}_2 \cdots h^{c_d}_d$ for $c_1,...,c_d \in \mathbb{N}$. Let $\Phi_j = \{h^{c_1}_1 h^{c_2}_2 \cdots h^{c_d}_d|\sum_{l=1}^{d}c_l = j\}$ given the degree $j$ (e.g., $\Phi_0= \{1\}$). Following~\cite{zhang2012functional}, we truncate the Taylor series in $\widetilde{\mathcal{L}}$ to retain polynomial terms with order lower than $3$. Specially, only $\Phi_0, \Phi_1$ and $\Phi_2$ involved in $\widetilde{\mathcal{L}}$ with polynomial coefficients as $\frac{f^{(0)}(0)}{0!} = log^{2}, \frac{f^{(1)}(0)}{1!} = -\frac{1}{
2}, \frac{f^{(2)}(0)}{2!} = \frac{1}{8}$. 

Based on $\widetilde{\mathcal{L}}$, we now explore the global privacy sensitivity of the recommendation loss, denoted as $\Delta$. Let $\lambda_{\phi t} \in \mathbb{R}$ denote the coefficient of $\phi(\mathbf{h})$ in the polynomial. In each mini-batch training iteration, the difference of input data only influences these coefficients, so we add perturbation to $\widetilde{\mathcal{L}}$'s coefficients based on the sensitivity. In the following lemma, we derive the global sensitivity $\Delta$ of $\widetilde{\mathcal{L}}$, which serves as the important scale factor in determining the noise intensity: 
\begin{lemma}
The global sensitivity of $\widetilde{\mathcal{L}}$ is $d + \frac{d^2}{4}$.
\end{lemma}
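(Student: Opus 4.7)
The plan is to derive $\Delta$ directly from the truncated Taylor expansion of $\widetilde{\mathcal{L}}$ by bounding, for a single training triple $t=(u,v,v')$, the sum of absolute values of all polynomial coefficients $\lambda_{\phi t}$ with $\phi \in \Phi_0 \cup \Phi_1 \cup \Phi_2$. Following the functional mechanism, once that per-sample bound is in hand, the global sensitivity between neighboring datasets differing in one record is at most twice this bound (since replacing one record subtracts its contribution and adds another's, and each is bounded by the same quantity).

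First I would substitute the given polynomial coefficients $\frac{f^{(0)}(0)}{0!}=\log 2$, $\frac{f^{(1)}(0)}{1!}=-\tfrac{1}{2}$, $\frac{f^{(2)}(0)}{2!}=\tfrac{1}{8}$ into the expression for $\widetilde{\mathcal{L}}$ and set $\mathbf{r}_t := \mathbf{q}_{uv}-\mathbf{q}_{uv'} \in (-1,1)^d$, using the normalization $\mathbf{q}_{uv}\in(0,1)^d$ assumed just above the lemma. Expanding $(\mathbf{h}^{\top}\mathbf{r}_t)^2 = \sum_{l} h_l^2 r_{t,l}^2 + 2\sum_{l<l'} h_l h_{l'} r_{t,l} r_{t,l'}$, I can read off the per-sample coefficients:
\begin{equation*}
\lambda_{1,t}=\log 2,\quad \lambda_{h_l,t}=-\tfrac{1}{2}r_{t,l},\quad \lambda_{h_l^2,t}=\tfrac{1}{8}r_{t,l}^2,\quad \lambda_{h_l h_{l'},t}=\tfrac{1}{4}r_{t,l}r_{t,l'}\ (l<l').
\end{equation*}
The constant $\lambda_{1,t}=\log 2$ is identical across samples, so it cancels in any difference between neighboring datasets and contributes nothing to the sensitivity.

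Next, I would bound the degree-one and degree-two contributions separately using $|r_{t,l}|\le 1$. For the linear terms,
\begin{equation*}
\sum_{\phi \in \Phi_1} |\lambda_{\phi,t}| \;=\; \tfrac{1}{2}\sum_{l=1}^{d}|r_{t,l}| \;\le\; \tfrac{d}{2}.
\end{equation*}
For the quadratic terms, the key observation (which is the one piece of manipulation worth pointing out explicitly) is that the diagonal and off-diagonal coefficients combine to form a perfect square in the $\ell_1$ norm of $\mathbf{r}_t$:
\begin{equation*}
\sum_{\phi \in \Phi_2} |\lambda_{\phi,t}| \;=\; \tfrac{1}{8}\sum_{l}r_{t,l}^{2}+\tfrac{1}{4}\sum_{l<l'}|r_{t,l}||r_{t,l'}| \;=\; \tfrac{1}{8}\Bigl(\sum_{l=1}^{d}|r_{t,l}|\Bigr)^{2} \;\le\; \tfrac{d^{2}}{8}.
\end{equation*}
Adding the two bounds gives $\sum_{j=0}^{2}\sum_{\phi\in\Phi_j}|\lambda_{\phi,t}|\le \tfrac{d}{2}+\tfrac{d^{2}}{8}$ for every triple $t$.

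Finally, by the standard functional-mechanism sensitivity argument, for neighboring $\mathcal{D},\mathcal{D}'$ that differ in exactly one record,
\begin{equation*}
\Delta \;=\; \max_{\mathcal{D},\mathcal{D}'}\sum_{j=0}^{2}\sum_{\phi \in \Phi_j}\bigl|\lambda_\phi(\mathcal{D})-\lambda_\phi(\mathcal{D}')\bigr| \;\le\; 2\,\max_{t}\sum_{j=0}^{2}\sum_{\phi\in\Phi_j}|\lambda_{\phi,t}| \;\le\; d+\tfrac{d^{2}}{4},
\end{equation*}
which is the claimed bound. The main obstacle I anticipate is not the arithmetic but justifying the bound $|r_{t,l}|\le 1$ cleanly — this relies on the clipping step that forces each coordinate of $\mathbf{q}_{uv}$ into $(0,1)$, so I would state this assumption carefully at the start of the proof. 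The coefficient recombination in the quadratic term (writing the mixed sum as $\tfrac{1}{8}\|\mathbf{r}_t\|_1^{2}$) is the one non-routine trick; everything else is accounting.
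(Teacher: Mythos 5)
Your proposal is correct and follows essentially the same route as the paper's own proof: bound the per-sample sum of absolute coefficient values by $\tfrac{d}{2}+\tfrac{d^2}{8}$ using $|\overline{q}_m|\le 1$, then multiply by two for the neighboring-dataset difference. Your explicit recombination of the diagonal and off-diagonal quadratic coefficients into $\tfrac{1}{8}\bigl(\sum_l |r_{t,l}|\bigr)^2$ is just a cleaner statement of what the paper's unordered double sum $\tfrac{1}{8}\sum_{m,n}\overline{q}_m\overline{q}_n$ computes implicitly.
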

\begin{proof} Given $\widetilde{\mathcal{L}}$ and two training datasets $\mathcal{D}$, $\mathcal{D}'$ that differ in only one instance, for $J\geq 1$ and $\overline{\textbf{q}}= [\overline{q}_1,\overline{q}_2,...,\overline{q}_d]= \mathbf{q}_{uv}-\mathbf{q}_{uv'}$, we can derive:
\begin{small}
\begin{equation}\label{eq:Delta}
\begin{split}
    \Delta &= \sum_{j=1}^{J}\sum_{\phi \in \Phi_j}||\sum_{t \in \mathcal{D}}\lambda_{\phi t} - \sum_{t' \in \mathcal{D}'}\lambda_{\phi t'}||_1 \\
    &\leq 2 \cdot \underset{t}{\max} \sum_{j = 1}^{J}\sum_{\phi \in \Phi_{j}}||\lambda_{\phi t}||_{1}\\
    &\leq 2 \cdot \underset{t}{\max} \Big{(}\frac{f^{(1)}(0)}{1!}\sum_{m=1}^{d}\overline{q}_{m}\Big{)} +\frac{f^{(2)}(0)}{2!}\sum_{m\geq1,n\leq d}\overline{q}_{m}\overline{q}_{n} \\
    &\leq 2(\frac{\textnormal{dim}(\mathbf{q}_{uv})}{2}+\frac{\textnormal{dim}(\mathbf{q}_{uv})^2}{8}) \\
    &= d + \frac{d^2}{4},
\end{split}
\end{equation}
\end{small}
where $t = (u, v, v') \in \mathcal{D}$ is an arbitrary training sample and $\textnormal{dim}(\cdot)$ returns the dimension of a given vector.
\end{proof}
Specifically, we employ FM to perturb the loss $\widetilde{\mathcal{L}}$ by injecting Laplace noise\footnote{In our paper, the mean of our Laplace distribution is 0, i.e., $Lap(\cdot)= Lap(0,\cdot)$.} $Lap(\frac{\Delta}{\epsilon |\mathcal{D}|})$ into its polynomial coefficients, and the perturbed function is denoted by $\widehat{\mathcal{L}}$. The injected Laplace noise with standard deviation of $\frac{\Delta}{\epsilon |\mathcal{D}|}$ has been widely proven to effectively retain $\epsilon-$DP after perturbation \cite{dwork2014algorithmic,dwork2006calibrating,zhang2012functional}. Note that as $\Delta$ is the global sensitivity, it is evenly distributed to all instances in the training set $\mathcal{D}$ during perturbation. We showcase the full training process of GERAI with a differentially private loss in Algorithm~\ref{alg:loss_perturbation}. In Algorithm~\ref{alg:loss_perturbation}, we first compute the sensitivity $\Delta$ of loss $\widetilde{\mathcal{L}}$. In each iteration, we add perturbation to every coefficient in the polynomial approximation of the loss function. Afterwards, we launch the training session for GERAI with perturbed user feature vectors $\{\widehat{\textbf{x}}_u|u\in\mathcal{U}\}$, where we use the perturbed coefficients to obtain the perturbed loss $\widehat{\mathcal{L}}$ and optimize the parameters of the model by minimizing $\widehat{\mathcal{L}}$. Finally, we 
formally prove that Algorithm~\ref{alg:loss_perturbation} satisfies $\epsilon-$DP:
\begin{lemma}
Algorithm 3 maintains $\epsilon-$differential privacy.
\end{lemma}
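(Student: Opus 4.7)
The plan is to verify that Algorithm~\ref{alg:loss_perturbation} is an instance of the functional mechanism of Zhang et al., so the differential-privacy guarantee falls out of the standard Laplace-mechanism argument combined with the post-processing property. First, I would recall how the algorithm decomposes the training procedure: the approximate loss $\widetilde{\mathcal{L}}$ is a polynomial in the projection parameter $\mathbf{h}$ truncated at degree $2$, and all dependence on the training set $\mathcal{D}$ is concentrated in the monomial coefficients $\sum_{t\in\mathcal{D}}\lambda_{\phi t}$ for $\phi\in\Phi_0\cup\Phi_1\cup\Phi_2$. Thus releasing these perturbed coefficients is equivalent, for privacy accounting purposes, to releasing the whole perturbed loss $\widehat{\mathcal{L}}$.

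Next I would appeal to Lemma~3, which bounds the total L1 sensitivity of the concatenated coefficient vector by $\Delta = d + d^{2}/4$ when $\mathcal{D}$ and $\mathcal{D}'$ differ in a single triplet $(u,v,v')$. Since $\widetilde{\mathcal{L}}$ carries an outer factor $1/|\mathcal{D}|$, the sensitivity of the averaged coefficient is $\Delta/|\mathcal{D}|$, which explains and justifies the noise scale $Lap\!\left(\Delta/(\epsilon|\mathcal{D}|)\right)$ used on line~4 of Algorithm~\ref{alg:loss_perturbation}. I would then invoke the standard Laplace mechanism: adding independent $Lap(\Delta/(\epsilon|\mathcal{D}|))$ noise to each of the (finitely many) coefficients produces a joint release whose density ratio on any two neighbouring datasets is bounded by $\exp(\epsilon)$, so the perturbed coefficient vector, and hence $\widehat{\mathcal{L}}$ viewed as a function of the parameters, satisfies $\epsilon$-DP.

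Finally I would close the argument with the post-processing property of differential privacy. The remainder of Algorithm~\ref{alg:loss_perturbation}, namely the gradient steps on $\widehat{\mathcal{L}}$ together with the use of the perturbed features $\{\widehat{\mathbf{x}}_u\}$, never re-touches the raw dataset $\mathcal{D}$ or the unperturbed coefficients; the gradient updates are a data-independent transformation of $\widehat{\mathcal{L}}$ and the already-private inputs from Algorithm~\ref{alg:perturbation2}. Therefore the returned parameters $\Theta^{*}$ inherit the same $\epsilon$-DP guarantee as the release of $\widehat{\mathcal{L}}$.

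The main obstacle I expect is the bookkeeping around the noise scale: specifically, justifying why $\Delta/(\epsilon|\mathcal{D}|)$ (and not $\Delta/\epsilon$) is the correct calibration, which requires careful alignment between the averaged form of $\widetilde{\mathcal{L}}$ in Eq.(\ref{eq:loss_approx}) and the per-coefficient sensitivity derived in Lemma~3. A secondary subtlety is confirming that the iterative gradient optimization truly counts as post-processing, i.e., that the learnable parameters $\Theta^{*}$ depend on $\mathcal{D}$ solely through $\widehat{\mathcal{L}}$ and $\{\widehat{\mathbf{x}}_u\}$, so that no extra composition cost is incurred across the $\mathcal{T}$ training iterations.
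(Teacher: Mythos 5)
Your proposal follows essentially the same route as the paper: the paper's proof is precisely the Laplace-mechanism density-ratio computation over the perturbed coefficients $\{\lambda_\phi\}_{\phi\in\Phi_j}$, bounded via the triangle inequality and the sensitivity $\Delta = d + d^2/4$ from Lemma 3, which is what you invoke as the "standard Laplace mechanism" calibrated by Lemma 3. Your additional remarks on the $1/|\mathcal{D}|$ normalization and on treating the subsequent gradient iterations as post-processing are sound and in fact make explicit two bookkeeping steps that the paper leaves implicit.
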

\begin{proof}Assume that $\mathcal{D}$ and $\mathcal{D}^{'}$ are two training datasets differing in only one instance denoted by $T$ and $T'$, then we have:
\begin{small}
\begin{equation}\normalsize
\begin{split}
    \frac{Pr(\widehat{\mathcal{L}}|{\mathcal{D}})}{Pr(\widehat{\mathcal{L}}|\mathcal{D}')} &= \frac{\Pi_{j=1}^{2}\Pi_{\phi \in \Phi_{j}}\exp(\frac{\epsilon}{\Delta} ||\sum_{t \in \mathcal{D}}\lambda_{\phi t}-\lambda_{\phi}||_1)}{\Pi_{j=1}^{2}\Pi_{\phi \in \Phi_{j}}\exp(\frac{\epsilon}{\Delta} ||\sum_{t' \in \mathcal{D}'}\lambda_{\phi t'}-\lambda_{\phi}||_1)}\\
    &\leq \Pi_{j=1}^{2} \Pi_{\phi \in \Phi_{j}} \exp(\frac{\epsilon}{\Delta} ||\sum_{t\in \mathcal{D}}\lambda_{\phi t}-\sum_{t'\in \mathcal{D}'}\lambda_{\phi t'}||_{1})\\
    &=\Pi_{j=1}^{2} \Pi_{\phi \in \Phi_{j}} \exp(\frac{\epsilon}{\Delta}||\lambda_{\phi T}-\lambda_{\phi T'}||_{1})\\
    &= \exp(\frac{\epsilon}{\Delta}\sum_{j=1}^{2}\sum_{\phi\in\Phi_{j}}||\lambda_{\phi T} -\lambda_{\phi T'}||_{1})\\
    &\leq \exp(\frac{\epsilon}{\Delta} \cdot 2 \cdot \underset{T}{\max}\sum_{j=1}^{2}\sum_{\phi\in\Phi_{j}}||\lambda_{\phi T}||_{1}) = \exp(\epsilon).
\end{split}
\end{equation}
\end{small}
Then according to Definition 1, Algorithm~\ref{alg:loss_perturbation} satisfies $\epsilon-$DP.
\end{proof}
In short, with our proposed dual-stage perturbation strategy for both the user data and the training loss, GERAI fully preserves user privacy with a demonstrable guarantee, while being able to achieve minimal compromise on the recommendation effectiveness compared with a non-private, GCN-based counterpart. Furthermore, GERAI can be trained via stochastic gradient descent (SGD) algorithms in an end-to-end fashion, showing its real-world practicality.

\section{Experiments}
\begin{table}[t!]
\caption{Features extracted from the dataset.}
  \scalebox{0.9}{%
  \begin{tabular}{p{9cm}}
    \toprule
    \midrule
\textbf{- Number of rated products}\\
\textbf{- Number and ratio of each rating level given by a user}\\
\textbf{- Ratio of positive and negative ratings}: The proportions of high ratings (4 and 5) and low ratings (1 and 2) of a user.\\
\textbf{- Entropy of ratings}: It is calculated as $-\sum_{\forall r}Prop_{r}\log Prop_{r}$, where $Prop_{r}$ is the proportion that a user gives the rating of $r$.\\
\textbf{- Median, min, max, and average of ratings}\\
\textbf{- Gender}: It is either male or female.\\
\textbf{- Occupation}: A total of 21 possible occupations are extracted.\\
\textbf{- Age}: We categorize age attribute into 3 groups: over 45, under 35, and between 35 and 45.\\
    \midrule
   \bottomrule
\end{tabular}}
\label{tab:ff}
\end{table}
In this section, we conduct experiments to evaluate the performance of GERAI in terms of both privacy strength and recommendation effectiveness. Particularly, we aim to answer the following research questions (RQs):  
\begin{itemize}[leftmargin=*]
    \item \textbf{RQ1:} Can GERAI effectively protect sensitive user data from attribute inference attack?
    \item \textbf{RQ2:} How does GERAI perform in top-$K$ recommendation?
    \item \textbf{RQ3:} How does the key hyperparameters affect the privacy-preserving property and recommendation accuracy of GERAI?
    \item \textbf{RQ:4} What is the contribution from each part of the dual-stage perturbation paradigm in GERAI? 
    \item \textbf{RQ5:} Can GERAI defend different types of unseen attribute inference attack models?
\end{itemize}
\subsection{Dataset}
Following \cite{beigi2020privacy}, we use the publicly available ML-100K datasets~\cite{grouplens_2019} in our experiments. It contains $10,000$ ratings from $943$ users on $1,682$ movies collected from the MovieLens website. In addition, in the collected dataset, each user is associated with three sensitive attributes, i.e., gender (Gen), age (Age) and occupation (Occ). Similar to~\cite{beigi2020privacy}, we convert the gender, age and occupation into a $2$, $3$ and $21$-dimensional categorical feature, respectively. Table \ref{tab:ff} provides a summary of all the features we have used. 
\subsection{Baseline Methods and Parameter Settings}
We evaluate GERAI by comparing with the following baselines:
\begin{itemize}[leftmargin=*]
    \item \textbf{BPR}: It is a widely used non-private learning-to-rank model for recommendation~\cite{rendle2012bpr}.
	\item \textbf{GCN}: This is the non-private, GCN-based recommendation model proposed in \cite{ying2018graph}. 
	\item \textbf{Blurm}: This method directly uses perturbed user-item ratings to train the recommender system~\cite{weinsberg2012blurme}.
	\item \textbf{DPAE}: In DPAE, Gaussian mechanism is combined in the stochastic gradient descent process of an autoencoder-based recommender so that the training phase meets the requirements of differential privacy~\cite{liu2019differentially}.
	\item \textbf{DPNE}: It aims to develop a differentially private network embedding method based on matrix factorization, and it is the state-of-the-art privacy preserving network embedding method for link prediction~\cite{xu2018dpne}.
	\item \textbf{DPMF}: It uses objective perturbation with matrix factorization to ensure the final item profiles satisfy differential privacy~\cite{jingyu1763differentially}.
	\item \textbf{RAP}: It is the state-of-the-art recommendation model that is designed against attribute inference attacks~\cite{beigi2020privacy}. The key idea is to facilitate adversarial learning with an RNN-based private attribute inference attacker and a CF-based recommender.
\end{itemize}
In GERAI, we set $\gamma$, learning rate and batch size to $0.01$, $0.005$ and $64$, respectively. Without special mention, we use three-layer networks for the neural components and initialized parameters to random values by using Gaussian distribution, which has $0$ mean and a standard deviation of $1$. The final embedding dimension is $d = 60$ and the privacy budget is $\epsilon = 0.4$ and $\epsilon^{\rhd} = 20$, while the effect of different hyperparameter values will be further discussed in Section~\ref{sec:hyper_analysis}. For all baseline methods, we use the optimal hyperparameters provided in the original papers. 
\vspace{-0.5em}
\subsection{Evaluation Protocols}\label{sec:protocol}
\begin{table}[t!]
\vspace{-0.5em}
 \caption{Attribute inference attack results. Lower F1 scores represent better privacy protection from the model.}
\centering
\scalebox{0.9}{%
 \begin{tabular}{|p{1.2cm}<{\centering} | p{1.0cm}<{\centering}| p{0.7cm}<{\centering} |p{0.7cm}<{\centering} |p{0.7cm}<{\centering}| p{0.7cm}<{\centering}|p{0.7cm}<{\centering}|p{0.7cm}<{\centering}|} 
 \hline
 \multirow{2}{*}{Attribute}&\multirow{2}{*}{Method}&\multicolumn{6}{c|}{F1 Score}\\
  \cline{3-8}
  &&K=5&K=10&K=15&K=20&K=25&K=30\\
 \hline
 \multirow{7}{*}{Age}&BPR&0.693&0.694&0.699&0.720&0.676&0.693\\
 &GCN&0.697&0.725&0.730&0.725&0.735&0.746\\
 &Blurm&0.715&0.725&0.716&0.692&0.679&0.710\\
 &DPAE&0.694&0.688&0.695&0.674&0.695&0.684\\
 &DPNE&0.684&0.685&0.700&0.701&0.679&0.674\\
 &DPMF&0.709&0.703&0.695&0.699&0.684&0.689\\
 &RAP&\textbf{0.661}&\textbf{0.650}&0.677&0.666&0.674&0.671\\
 \cline{2-8}
 &GERAI&0.677&0.663&\textbf{0.648}&\textbf{0.651}&\textbf{0.652}&\textbf{0.650}\\
 \hline
 \hline
 \multirow{7}{*}{Gen}
 &BPR&0.810&0.773&0.808&0.778&0.782&0.801\\
 &GCN&0.851&0.836&0.891&0.880&0.862&0.869\\
 &Blurm&0.789&0.788&0.789&0.761&0.761&0.788\\
 &DPAE&0.781&0.771&0.770&0.772&0.771&0.777\\
 &DPNE&0.788&0.772&0.781&0.776&0.798&0.788\\
 &DPMF&0.783&0.770&0.768&0.765&0.761&0.771\\
 &RAP&0.787&0.771&0.763&0.772&0.776&0.763\\
 \cline{2-8}
 &GERAI&\textbf{0.760}&\textbf{0.755}&\textbf{0.763}&\textbf{0.760}&\textbf{0.744}&\textbf{0.755}\\
 \hline
 \hline
 \multirow{7}{*}{Occ}
 &BPR&0.276&0.277&0.264&0.263&0.289&0.267\\
 &GCN&0.277&0.277&0.277&0.267&0.272&0.270\\
 &Blurm&0.267&0.267&0.262&0.262&0.267&0.269\\
 &DPAE&0.266&0.260&0.255&0.261&0.260&0.261\\
 &DPNE&0.267&0.265&0.266&0.264&0.266&0.262\\
 &DPMF&0.266&0.262&0.270&0.265&0.270&0.267\\
 &RAP&0.260&0.262&0.260&0.263&0.248&0.260\\
 \cline{2-8}
 &GERAI&\textbf{0.260}&\textbf{0.261}&\textbf{0.255}&\textbf{0.256}&\textbf{0.246}&\textbf{0.251}\\
 \hline
 \end{tabular}}
 \label{table:attack}
\end{table}
\textbf{Attribute Inference Attack Resistance.} To evaluate all models' robustness against attribute inference attacks, we first build a strong adversary classifier (i.e., attacker). Specifically, we use a two-layer deep neural network model as the attacker. Suppose there are $K$ items $\mathcal{R}(u)$ recommended by a fully trained recommender to user $u\in\mathcal{U}$, then the input of the attacker is formulated as $\sum_{\forall{v\in\mathcal{I}(u)}}{onehot(v)}+\sum_{\forall{v\in\mathcal{R}(u)}}{onehot(v)}$ where $onehot(\cdot)$ returns the one-hot encoding of a given item. The hidden dimension is set to $100$, and a linear projection is used to estimate the class of the target attribute. We randomly choose $80\%$ of the labelled users to train the attacker, and use the remainder to test the attacker's inference accuracy. Note that the attacker model is unknown  to all recommenders during the training process. To quantify a model's privacy-preserving capability, we leverage a widely-used classification metric \textit{F1 score} \cite{ziegler2005improving} to evaluate the classification performance of the attacker. Correspondingly, lower F1 scores demonstrate higher resistance to this inference attack.

\textbf{Recommendation Effectiveness.} For each user, we randomly pick $80\%$ of her/his interacted items to train all recommendation models, while the rest $20\%$ is held out for evaluation. We employ $Hit@K$ and $NDCG@K$, which are two popular metrics to judge the quality of the top-$K$ ranking list. Results on both attribute inference and recommendation are averaged over five runs.
\vspace{-0.5em}
\subsection{Privacy Protection Effectiveness (RQ1)}
\begin{table}[t!]
\vspace{-0.5em}
\caption{Recommendation effectiveness results. For both Hit@K and NDCG@K, the higher the better.}
\centering
\scalebox{0.88}{%
 \begin{tabular}{|p{1.3cm}<{\centering} |p{1.2cm}<{\centering} |p{0.7cm}<{\centering} |p{0.7cm}<{\centering}| p{0.7cm}<{\centering}| p{0.7cm}<{\centering} |p{0.7cm}<{\centering}|p{0.7cm}<{\centering}|} 
 \hline
 &Method&K=5&K=10&K=15&K=20&K=25&K=30\\
 \hline 
 \multirow{7}{*}{Hit@K}&BPR&0.348&0.507&0.614&0.686&0.741&0.791\\
 &GCN&0.365&0.519&0.619&0.690&0.743&0.789\\
 &Blurm&0.184&0.263&0.319&0.364&0.405&0.443\\
 &DPAE&0.185&0.285&0.345&0.394&0.438&0.458\\
 &DPNE&0.301&0.430&0.525&0.595&0.640&0.684\\
 &DPMF&0.195&0.280&0.343&0.394&0.432&0.474\\
 &RAP&0.319&0.475&0.575&0.648&0.706&0.754\\
 \cline{2-8}
 &GERAI&0.333&0.495&0.600&0.670&0.724&0.767\\
 \hline
 \hline 
 \multirow{7}{*}{NDCG@K}&BPR&0.228&0.280&0.310&0.330&0.341&0.363\\
 &GCN&0.247&0.296&0.323&0.340&0.351&0.360\\
 &Blurm&0.124&0.148&0.164&0.174&0.183&0.191\\
 &DPAE&0.126&0.153&0.170&0.176&0.180&0.188\\
 &DPNE&0.204&0.231&0.268&0.289&0.299&0.306\\
 &DPMF&0.134&0.154&0.171&0.182&0.191&0.186\\
 &RAP&0.211&0.264&0.286&0.308&0.317&0.329\\
 \cline{2-8}
 &GERAI&0.217&0.270&0.296&0.314&0.326&0.334\\
 \hline
 \end{tabular}}
 \label{table:rec}
\end{table}
\begin{figure*}[t!]
\centering
\begin{tabular}{cccc}
	\multicolumn{4}{c}{\includegraphics[scale=0.65]{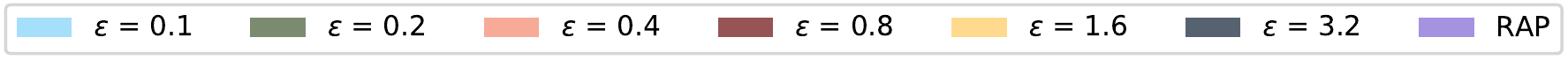}}\\
    \includegraphics[width = 1.5in]{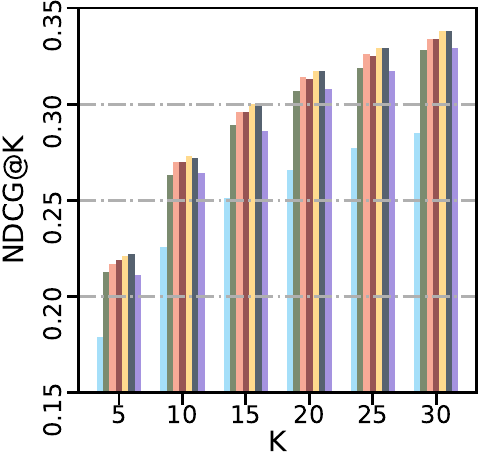}\vspace{-0.1em}
    &\includegraphics[width = 1.48in]{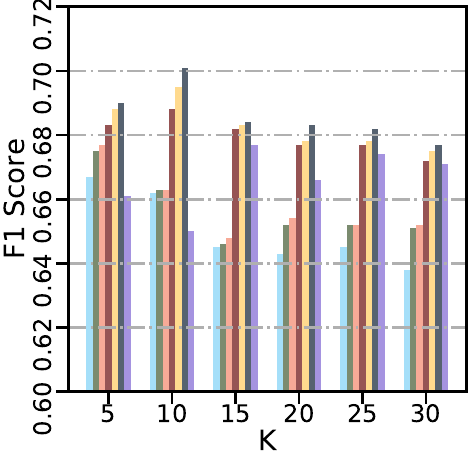}\vspace{-0.1em}
    &\includegraphics[width = 1.48in]{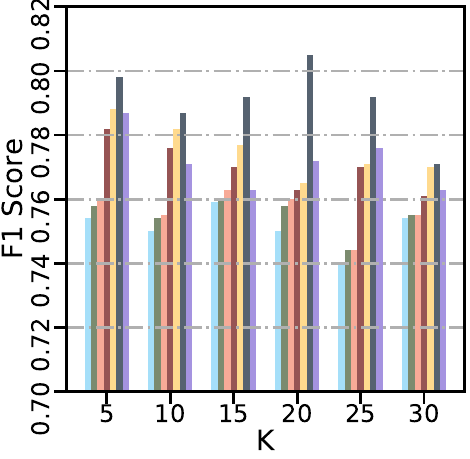}\vspace{-0.1em}
    &\includegraphics[width = 1.48in]{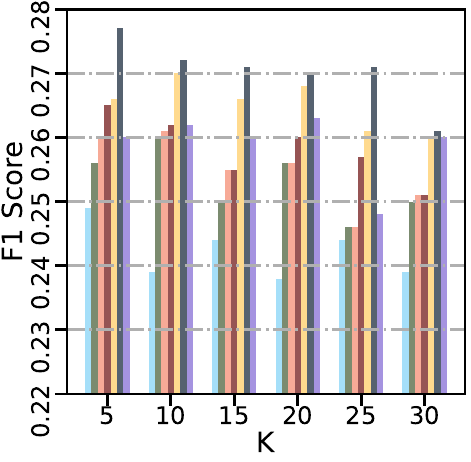}\vspace{-0.1em}\\
    \multicolumn{4}{c}{\small{(a) Above: Recommendation, Age, Gender and Occupation Inference  results w.r.t. privacy budget $\epsilon$}}\vspace{0.25em}\\
    
	\multicolumn{4}{c}{\includegraphics[scale=0.60]{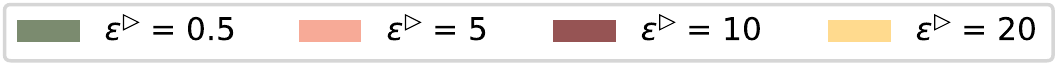}}\\
	\includegraphics[width = 1.5in]{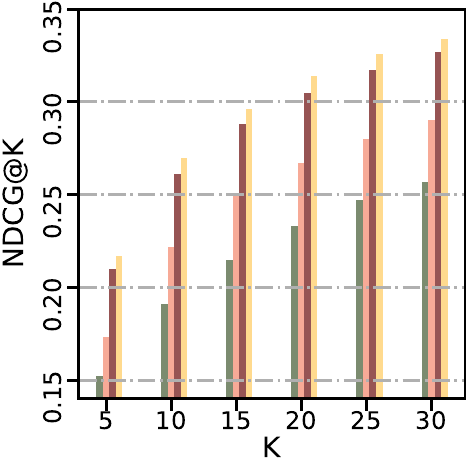}\vspace{-0.1em}&
    \includegraphics[width = 1.46in]{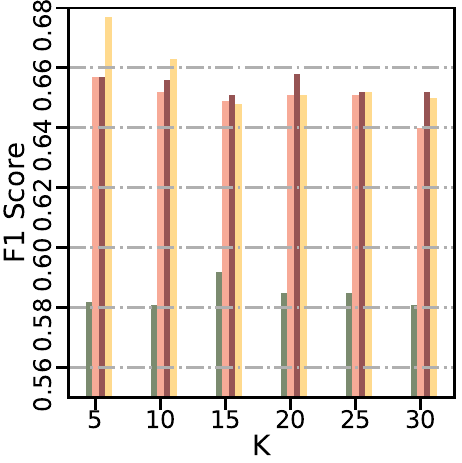}\vspace{-0.1em}
    &\includegraphics[width = 1.46in]{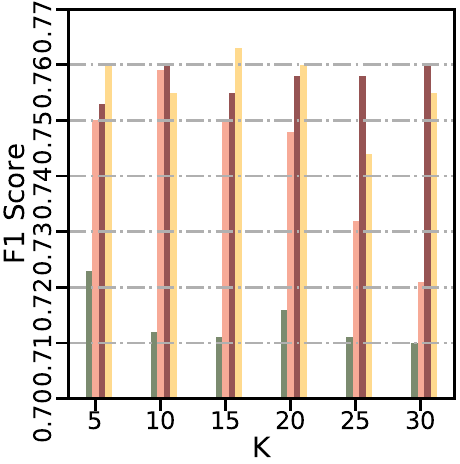}\vspace{-0.1em}
    &\includegraphics[width = 1.46in]{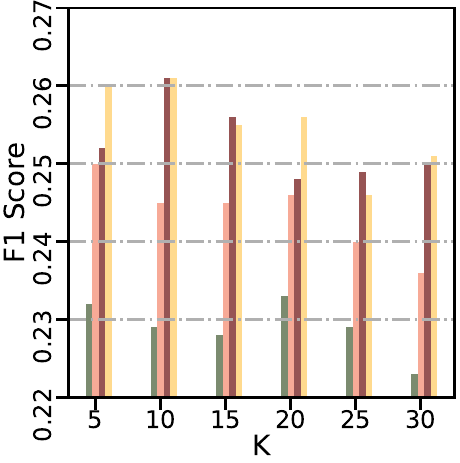}\vspace{-0.1em}\\
    \multicolumn{4}{c}{\small{(b) Above: Recommendation, Age, Gender and Occupation Inference  results w.r.t. privacy budget $\epsilon^{\rhd}$}}\vspace{0.25em}\\

	\multicolumn{4}{c}{\includegraphics[scale=0.60]{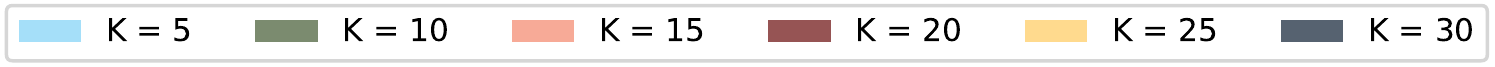}}\\
	\includegraphics[width = 1.52in]{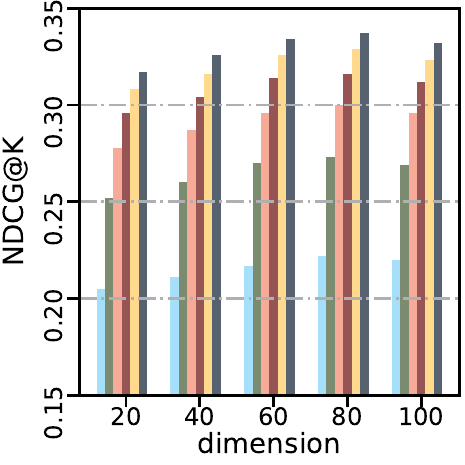}\vspace{-0.1em}&
    \includegraphics[width = 1.48in]{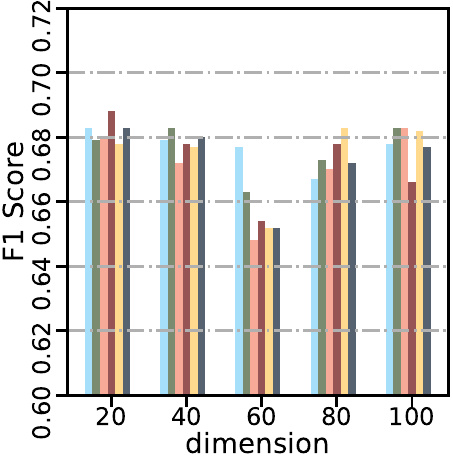}\vspace{-0.1em}
    &\includegraphics[width = 1.48in]{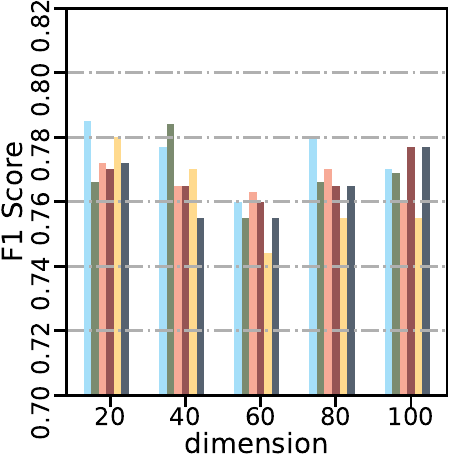}\vspace{-0.1em}
    &\includegraphics[width = 1.48in]{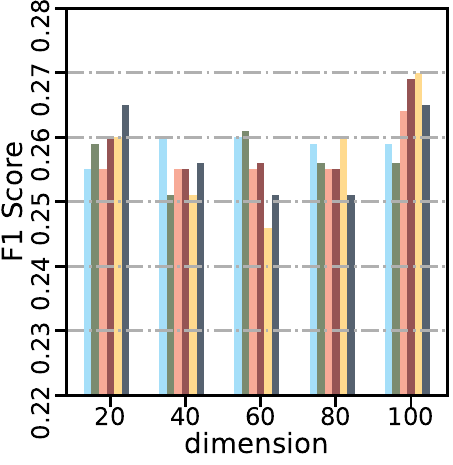}\vspace{-0.1em}\\
  \multicolumn{4}{c}{\small{(c) Above: Recommendation, Age, Gender and Occupation Inference  results w.r.t. dimension $d$}}\\
	\end{tabular}
\vspace{-0.4cm}
\caption{Recommendation and privacy protection results w.r.t. privacy budget $\epsilon$, $\epsilon^{\rhd}$ and $d$.}
\label{fig:para_epsilon}
\vspace{-0.8em}
\end{figure*}
Table~\ref{table:attack} shows the F1 scores achieved by the attribute inference attack model described in Section~\ref{sec:protocol} on all the baselines. Lower F1 scores show higher resistance of the recommender to attribute inference attacks. Obviously, GERAI constantly outperforms all baselines with $K \in \{ 15, 20, 25, 30\}$, indicating that our model is able to protect users' privacy and produce recommendations with strong privacy guarantee. Though RAP achieves slightly better results on the age attribute at $K=5$ and $K=10$, it falls behind GERAI in all other cases. As a model specifically designed for supervised learning, RAP is naturally robust against attribute inference attack. 
We also observe that GERAI has significantly better performance against attribute inference attack in comparison to Blurm that obfuscates user-item rating data to the recommender system. The results confirm the effectiveness of our dual-stage perturbation in private attribute protection. In addition, compared with conventional recommender systems that collaboratively model user-item interactions (i.e., BPR and GCN), models that make use of differential privacy (i.e., DPAE, DPMF, DPNE and GERAI) show obvious superiority in resistance to attribute inference attack. However, compared with all DP-based recommender systems, GERAI achieves significantly lower $F1$ score for all three private attributes and thus outperform those methods in terms of obscuring users' private attribute information. The reason is that the proposed privacy mechanisms in those DP-based methods cannot have the same strength as GERAI on preventing leakage of sensitive information from recommendation results. This further validates that incorporating differential privacy may prevent directly disclosing private attributes, but these methods cannot effectively provide higher privacy levels. Furthermore, with the increasing value of $K$, the performance of the attacker slightly decreases. One possible reason is that, more recommended products will become a natural ``noise'' to help reduce the risk of privacy disclosure. Finally, we observe that GCN has the weakest privacy protection results because it directly incorporates the node features with sensitive information. Note that compared with GCN, GERAI achieves an average relative improvement of $
11\%$, $14.4\%$ and $6.75\%$ respectively on age, gender and occupation, which implies that DP can ensure that the published recommendations of GERAI can avoid breaching users' privacy. 
\vspace{-1em}
\subsection{Recommendation Effectiveness (RQ2)}
We summarize all models' performance on personalized recommendation with Table~\ref{table:rec}. Note that higher $Hit@K$ and $NDCG@K$ values imply higher recommendation quality. Firstly, GERAI outperforms all privacy-preserving baselines consistently in terms of both $Hit@K$ and $NDCG@K$. Particularly, the improvement of GERAI with $K = 5$ demonstrate that our model can accurately rank the ground truth movies at the top-$5$ positions. In addition, compared with RAP, GERAI yields recommendation results that are closer to the 
state-of-the-art GCN. Thanks to the dual-stage perturbation setting where two sets of privacy budgets are used, a relatively higher privacy for user feature perturbation does not significantly impede the recommendation accuracy, and is sufficient for high-level attribute protection. Furthermore, the gap between the ranking accuracy drops with the increasing value of $K$. Finally, GCN achieves the best performance among all methods except when $K = 30$, which showcases the intrinsic strength of GCN-based recommenders. Meanwhile, Blurm has the worst performance among all methods as the way it adds noise to the user-item interaction data is harmful for the recommendation quality. 
\vspace{-1em}
\subsection{Accuracy and Privacy (RQ3)}\label{sec:hyper_analysis}
We answer RQ3 by investigating the performance fluctuations of GERAI with varied global and local privacy budgets $\epsilon$, $\epsilon^{\rhd}$ and embedding dimension $d$. We vary the value of one hyperparameter while keeping the other unchanged, and record the new recommendation and attribute inference results achieved. Figure~\ref{fig:para_epsilon} plots the results with different parameter settings.

\textbf{Impact of Global Privacy Budget $\epsilon$ for Loss Perturbation.} The value of the privacy budget $\epsilon$ is examined in $\{0.1, 0.2, 0.4, 0.8, 1.6,\\ 3.2\}$. In general, our GERAI outperforms RAP in terms of recommendation accuracy, and the performance improvement tends to become less significant when $\epsilon$ becomes quite small. Since a smaller $\epsilon$ requires a larger amount of noise to be injected to the objective function, it negatively influences the recommendation results. The results further confirms the effectiveness of GCNs-based recommendation component in our model, which helps GERAI preserve recommendation quality in practice. Furthermore, though the attack results illustrate that a relatively small $\epsilon$ (large noise) can obtain better performance on privacy protection within our expectation, it also degraded recommendation results correspondingly. Compared with RAP, the results imply that, by choosing a proper value of $\epsilon$ ($0.4$ in our case), our GERAI can achieve a good trade-off between privacy protection and recommendation accuracy.

\textbf{Impact of Local Privacy Budget $\epsilon^{\rhd}$ for User Feature Perturbation.} We study the impact of the privacy budget on input features with $\epsilon^{\rhd} \in \{0.5, 5, 10, 20\}$. It is worth mentioning that we seek a relatively higher value of $\epsilon^{\rhd}$ to maintain moderate utility of user features. From Figure ~\ref{fig:para_epsilon}, we can draw the observation that though reducing the value of privacy budget $\epsilon^{\rhd}$ in the input features may help the model yield better performance against attribute inference attack, GERAI generally achieves a significant drop on recommendation performance with a smaller $\epsilon^{\rhd}$. Particularly, when $\epsilon^{\rhd} = 0.5$, the recommendation results show that GERAI cannot capture users' actual preferences. This is because the feature vector $\widehat{\textbf{x}}_u$ determines the number of non-zero elements in base embedding of our model, which can cause significant information loss when it is small. 
As the recommendation is also highly accurate when $\epsilon^{\rhd} = 10$, the attribute inference performance achieved by the attacker is occasionally comparable to setting $\epsilon^{\rhd} = 20$. Overall, setting $\epsilon^{\rhd}$ to $20$ is sufficient for preventing privacy leakage, while helping GERAI to achieve optimal recommendation results.

\textbf{Impact of Dimension $d$.} As suggested by Eq.(\ref{eq:Delta}), the dimension $d$ controls the privacy sensitivity $\Delta$ and our model's expressiveness of the network structure. We vary the dimension $d$ in $\{20, 40, 60, 80 , 100\}$ and the corresponding noise parameters in Lapla-ce distribution are $\{0.00375, 0.01375, 0.03, 0.05, 0.08\}$. Obviously, the recommendation accuracy of GERAI benefits from a relatively larger dimension $d$, but the privacy protection performance is not always lower with a large $d$. The reason is that the value of the dimension $d$ is directly associated with our model's expressiveness, which means that a relatively larger $d$ can improve the recommendation results, providing better inputs to the attacker model as well. Furthermore, as shown in Figure~\ref{fig:para_epsilon}, the best privacy protection performance is commonly observed with $d=60$.
\vspace{-0.5em}
\subsection{Importance of Privacy Mechanism (RQ4)}\label{sec:ablation}
\begin{table}[t]
    \caption{Ablation test results.}
    \vspace{-0.5em}
    \centering
    \scalebox{0.82}{%
    \begin{tabular}{|p{1.6cm}<{\centering} |p{1.3cm}<{\centering}|p{1.3cm}<{\centering}|p{1.3cm}<{\centering}|p{1.3cm}<{\centering}|p{1.3cm}<{\centering}|}
        \hline
         \multirow{2}{*}{Variant}&\multicolumn{2}{c|}{Recommendation Task}&\multicolumn{3}{c|}{Attribute Inference Attack (F1 score)}\\
          \cline{2-6}
         &Hit@5&NDCG@5&Age&Gen&Occ  \\
         \hline
         GCN&0.365&0.247&0.697&0.851&0.277\\
         GERAI-NL&0.340&0.221&0.688&0.791&0.270\\
         GERAI-NF&0.337&0.219&0.679&0.788&0.266\\
         \hline
         GERAI&0.333&0.217&0.677&0.760&0.260\\
         \hline
    \end{tabular}}
    \label{tab:my_label}
  \vspace{-0.5em}
\end{table}
To better understand the performance gain from the major components proposed in GERAI, we perform ablation analysis on different degraded versions of GERAI. Each variant removes one privacy mechanism from the dual-stage perturbation paradigm. Table~\ref{tab:my_label} summarizes the outcomes in two tasks in terms of $Hit@5$, $NDCG@5$ and \textit{F1 score}. For benchmarking, we also demonstrate the results from the full version of GERAI and the non-private GCN.

\textbf{Removing perturbation at input stage (GERAI-NL).} The GERAI-NL only enforces $\epsilon$-differential privacy by perturbing the objective function in Eq. (\ref{eq:loss_approx}). We remove the privacy mechanism in users' features by sending raw features $\mathbf{X}$ directly into the recommendation component. After that, a slight performance decrease in the recommendation accuracy appeared, while achieving better performance against attribute inference attack. The results confirm that the functional mechanism in our model can help a GCN-based recommender satisfy privacy guarantee and yield comparable recommendation accuracy. In addition, GERAI significantly outperform GERAI-NL against attribute inference attack. Apparently, the raw user features are not properly perturbed in GERAI-NL, leading to a high potential risk in privacy leakage.   

\textbf{Removing perturbation at optimization stage (GERAI-NF).} We remove the privacy mechanism in objective function by setting $\epsilon = 0$. As the users' features are perturbed against information leaks, GERAI-NF achieves a significant performance improvement in the privacy protection, compared with the pure GCN. In addition, the slight performance difference between GERAI and GERAI-NF in two tasks could be attributed to the perturbation strategy in objective function. It further verifies that the joint effect of perturbation strategies in objective function and input features are beneficial for both recommendation and privacy protection purposes.

\subsection{Robustness against Different Attribute Inference Attackers (RQ5)}
\begin{table}[t]
\centering
\caption{Performance of attribute-inference attack w.r.t. different types of attacker.}
\vspace{-0.5em}
\scalebox{0.90}{%
 \begin{tabular}{|p{1.2cm}<{\centering} | p{1.2cm}<{\centering}| p{1.2cm}<{\centering} |p{1.2cm}<{\centering} |p{1.2cm}<{\centering}| p{1.2cm}<{\centering}|} 
 \hline
 \multirow{2}{*}{Attribute}&\multirow{2}{*}{Method}&\multicolumn{4}{c|}{F1 Score}\\
  \cline{3-6}
  &&DT&NB&KNN&GP\\
 \hline
 \multirow{7}{*}{Age}&BPR&0.466&0.376&0.487&0.260\\
 &GCN&0.513&0.366&0.487&0.619\\
 &Blurm&0.471&0.402&0.492&0.265\\
 &DPAE&0.492&0.481&0.476&\textbf{0.249}\\
 &DPNE&0.593&0.402&0.476&0.349\\
 &DPMF&0.561&0.402&0.486&0.275\\
 &RAP&0.476&0.407&0.513&0.534\\
 \cline{2-6}
 &GERAI&\textbf{0.434}&\textbf{0.365}&\textbf{0.466}&0.286\\
 \hline
 \hline
 \multirow{7}{*}{Gen}
 &BPR&0.651&0.444&0.561&0.672\\
 &GCN&0.635&0.429&0.566&0.810\\
 &Blurm&0.630&\textbf{0.370}&0.556&0.693\\
 &DPAE&0.635&0.381&\textbf{0.545}&0.683\\
 &DPNE&0.640&0.381&0.556&0.667\\
 &DPMF&0.683&0.376&0.556&0.683\\
 &RAP&0.670&0.439&0.619&0.709\\
 \cline{2-6}
 &GERAI&\textbf{0.619}&0.429&0.556&\textbf{0.656}\\
 \hline
 \hline
 \multirow{7}{*}{Occ}
 &BPR&0.132&0.148&0.070&0.116\\
 &GCN&0.122&0.148&0.063&0.222\\
 &Blurm&0.127&0.185&0.074&0.105\\
 &DPAE&0.111&0.180&\textbf{0.063}&0.212\\
 &DPNE&0.132&0.175&0.079&0.106\\
 &DPMF&0.175&0.180&0.074&0.104\\
 &RAP&0.122&0.148&0.090&0.127\\
 \cline{2-6}
 &GERAI&\textbf{0.111}&\textbf{0.116}&0.069&\textbf{0.090}\\
 \hline
 \end{tabular}}
 \label{table:attack_type}
\end{table}
In real-life scenarios, the models used by attribute inference attacker are usually unknown and unpredictable, so hereby we investigate how GERAI and other baseline methods perform in the presence of different types of attack models, namely Decision Tree (DT), Naive Bayesian (NB), KNN and Gaussian Process (GP), that are widely adopted classification methods. In this study, we use the top-5 recommendation generated by corresponding recommender methods for all attackers as introduced in Section~\ref{sec:protocol}. Table~\ref{table:attack_type} shows the attribute inference accuracy of each attacker. The first observation is that our proposed GERAI outperforms all the comparison methods in most scenarios. Though DPAE achieves slightly better results in several cases, its recommendation accuracy is non-comparable to GERAI. This further validates the challenge of incorporating privacy protection mechanism for personalized recommendation. Another observation is that there is a noticeable performance drop of RAP facing non-DNN attacker models. As RAP is trained to defend a specific DNN-based inference model, RAP is more effective when attacker is also DNN-based as shown in Table~\ref{table:attack}. However, RAP underperforms when facing the other five commonly used inference models, showing that GERAI can more effectively resist attribute inference attacks and protect users' privacy without any assumption on the type of attacker models.
\vspace{-0.5em}
\section{Related Work}
\textbf{Attribute Inference Attacks.} The target of attribute inference attack is inferring users' private attribute information from their publicly available information (e.g. recommendations). Three main branches of attribute inference attack approaches are often distinguished: friend-based, behavior-based and hybrid approaches. Friend-based approaches infer the target user's attribute in accordance with the target's friends' information~\cite{lindamood2009inferring,he2006inferring,gong2014joint}. He et al~\cite{he2006inferring} first constructed a Bayesian network to model the causal relations among people in social networks, which is used to obtain the probability that the user has a specific attribute. Behavior-based approaches achieve this purpose via users' behavioral information such as movie-rating behavior~\cite{weinsberg2012blurme} and Facebook likes~\cite{kosinski2013private}. The third type of works exploits both friend and behavioral information~\cite{jia2017attriinfer,gong2018attribute,gong2016you}. For example,~\cite{gong2014joint} creates a social-behavior-attribute network to infer attributes. Another work~\cite{jia2017attriinfer} 
models structural and behavioral information from users who do not have the attribute in the training process as a pairwise Markov Random Field. 

\textbf{Privacy and Recommender System.} With the growth of online platforms (e.g. Amazon), recommender systems play a pivotal role in promoting sales and enhancing user experience. The recommendations, however, may pose a severe threat to user privacy such as political inclinations via attribute inference attack. Hence, it is of paramount importance for system designers to construct a recommender system that can generate accurate recommendations and guarantee the privacy of users. Current researches that address vulnerability to privacy attacks often rely on providing encryption schemes~\cite{kim2018efficient,canny2002collaborative} and differential privacy~\cite{kandappu2014privacycanary}. Encryption-based methods enhance privacy of the conventional recommender systems with advanced encryption techniques such as homomorphic encryption~\cite{kim2018efficient,chai2020secure}. However, these methods are considered computation expensive as a third-party crypto-service provider is required. DP-based recommender systems can provide a strong and mathematically rigorous privacy guarantee~\cite{mcsherry2009differentially,berlioz2015applying,liu2015fast}. Works in this area aim to ensure that the recommender systems are not sensitive to any particular record and thus prevent adversaries from inferring a target user's ratings. ~\cite{parra2014optimal} proposes a perturbation method that adds or removes items and ratings to minimize privacy risk. Similarly, RAPPOR~\cite{erlingsson2014rappor} is proposed to perturb the user's data before sending them to the server by using the randomized response.
More recently, graph embedding techniques have been opening up more chances to improve the efficiency and scalability of the existing recommender systems~\cite{cenikj2020boosting,ying2018graph}. As the core of GCN is a graph embedding algorithm, our work is also quite related to another area: privacy preservation on graph embedding. Hua et al.~\cite{jingyu1763differentially} and Shin et al.~\cite{shin2018privacy} proposed gradient perturbation algorithms for differentially private matrix factorization to protect users' ratings and profiles. Another work enforces differential privacy to construct private covariance matrices to be further used by recommender~\cite{dwork2006calibrating}. Liu et al.~\cite{liu2019differentially} proposed DPAE that leverages the privacy problem in recommendation with the Autoencoders. Gaussian noise is added in the process of gradient descent. 
However, the existing privacy-preserving works in recommendation systems focus on protecting users against the membership attacks in which an adversary tries to infer a targeted user's actual ratings and deduce if the target is in the database, which is not fulfilled in our scenario. These limitations motivated us to propose GERAI that is able to counter private attribute inference attacks in the personalized recommendation system. 

\section{Conclusion}
In this paper, we propose a GCN-based recommender system that guards users against  attribute inference attacks while maintaining utility, named GERAI. GERAI firstly masks users' features including sensitive information, and then incorporates differential privacy into the GCN, which effectively bridges user preferences and features for generating secure recommendations such that a malicious attacker cannot infer their private attribute from users' interaction history and recommendations. The experimental results evidence that GERAI can yield superior performance on both recommendation and attribute protection tasks. 

\section{Acknowledgments}
The work has been supported by Australian Research Council (Grant
No.DP190101985 and DP170103954).
\balance
\bibliographystyle{ACM-Reference-Format}
\bibliography{sample-base}

\end{document}